\documentclass[journal]{IEEEtran}
\pdfoutput=1
\usepackage{amssymb}
\usepackage{amsfonts}
\usepackage{bbm}
\usepackage{amsfonts}
\usepackage[dvips]{graphicx}
\usepackage[dvips]{color}
\usepackage{graphicx}
\usepackage{amsmath}
\usepackage{fancyhdr}
\usepackage{stfloats}
\usepackage{multirow}
\usepackage{algorithm}
\usepackage{algorithmic}
\usepackage{cite}
\usepackage[bookmarks=false]{}
\usepackage{amsthm}
\usepackage{algorithm}
\usepackage{algorithmic}
\usepackage{mathbbol}
\usepackage{amsfonts}
\usepackage{graphicx}
\usepackage{amsmath}
\usepackage{amssymb}
\usepackage{latexsym}
\usepackage{graphicx}
\usepackage{cite}
\usepackage{url}
\usepackage{stfloats}
\usepackage{mathrsfs}
\usepackage{setspace}
\usepackage{booktabs}
\usepackage{latexsym}
\usepackage{url}
\usepackage{stfloats}
\usepackage{mathrsfs}
\theoremstyle{plain}
\newtheorem{definition}{Definition}
\newtheorem{theorem}{Theorem}

\usepackage{cite}
\usepackage{setspace}
\usepackage{textcomp}
\usepackage{makecell}
\usepackage{etoolbox}
\usepackage{diagbox}
\usepackage{caption}
\usepackage{float}
\newtheorem*{remark}{Remark}
\usepackage[table]{xcolor}
\usepackage{multirow}
\usepackage[tight,footnotesize]{subfigure}
\hyphenation{op-tical net-works semi-conduc-tor}
\usepackage[]{hyperref}
\hypersetup{
	bookmarksnumbered=true,     
	bookmarksopen=true,         
	bookmarksopenlevel=1,       
	colorlinks=true,
	linkcolor=black            
}
\captionsetup[figure]{labelformat=simple, labelsep=period} 
\captionsetup[table]{labelformat=simple, labelsep=newline, textfont=sc}


\begin{document}
\clearpage
\title{\huge Adaptable Semantic Compression and Resource Allocation for Task-Oriented Communications}
%

\author{Chuanhong Liu, Caili Guo, \emph{Senior Member}, \emph{IEEE}, Yang Yang and Nan Jiang
\thanks{This work was supported by the Fundamental Research Funds for the Central Universities (No.2021XD-A01-1).}
\thanks{Chuanhong Liu and Caili Guo are with the Beijing Key Laboratory of Network System Architecture and Convergence, School of Information and Communication Engineering, Beijing University of Posts and Telecommunications, Beijing 100876, China (e-mail:2016\_liuchuanhong@bupt.edu.cn; guocaili@bupt.edu.cn)}
\thanks{Yang Yang is with the Beijing Laboratory of Advanced Information Networks, School of Information and Communication Engineering, Beijing University of Posts and Telecommunications, Beijing 100876, China (e-mail:yangyang01@bupt.edu.cn)}
\thanks{Nan Jiang is with the State Key Laboratory of Networking and Switching Technology, School of Information and Communication Engineering, Beijing University of Posts and Telecommunications, Beijing 100876, China (e-mail:nan.jiang@bupt.edu.cn)}
}

%
%
%
\maketitle
\pagestyle{headings}
\vspace{0cm}
\begin{abstract}
Task-oriented communication is a new paradigm that aims at providing efficient connectivity for accomplishing intelligent tasks rather than the reception of every transmitted bit. In this paper, a deep learning-based task-oriented communication architecture is proposed where the user extracts, compresses and transmits semantics in an end-to-end (E2E) manner. Furthermore, an approach is proposed to compress the semantics according to their importance relevant to the task, namely, adaptable semantic compression (ASC). Assuming a delay-intolerant system, supporting multiple users indicates a problem that executing with the higher compression ratio requires fewer channel resources but leads to the distortion of semantics, while executing with the lower compression ratio requires more channel resources and thus may lead to a transmission failure due to delay constraint. To solve the problem, both compression ratio and resource allocation are optimized for the task-oriented communication system to maximize the success probability of tasks. Specifically, due to the nonconvexity of the problem, we propose a compression ratio and resource allocation (CRRA) algorithm by separating the problem into two subproblems and solving iteratively to obtain the convergent solution. Furthermore, considering the scenarios where users have various service levels, a compression ratio, resource allocation, and user selection (CRRAUS) algorithm is proposed to deal with the problem. In CRRAUS, users are adaptively selected to complete the corresponding intelligent tasks based on branch and bound method at the expense of higher algorithm complexity compared with CRRA. Simulation results show that the proposed ASC approach can reduce the size of transmitted data by up to 80\% without reducing task success, and the proposed CRRA and CRRAUS algorithms can obtain at least 15\% and 10\% success gains over baseline algorithms, respectively.
\end{abstract}

\begin{IEEEkeywords}
	semantic communication, task-oriented, semantic compression, resource allocation.
\end{IEEEkeywords}

\vspace{-0.cm}
\section{Introduction}
\label{sec:intro}


\IEEEPARstart{T}{o support} the rapid development of artificial intelligence (AI), providing connectivity for intelligent tasks performed on the edge is one of the key applications in future wireless communication systems \cite{5G,Letaief_Roadmap}. These tasks are deemed a machine understanding and performing tasks automatically in a fashion close to human cognition, such as recognizing specific content in a text or image. To provide connectivity for such tasks, the goal of communication is no longer the accurate reception of every transmitted bit but to transmit the meaningful content of raw data to accomplish the tasks. This communication paradigm refers to ``task-oriented" communication, which has attracted extensive attention from industry and academia\cite{Nine,Hoydis} and has been identified as one of the core challenges in the next-generation wireless communication systems \cite{Walid_6G}. Some recent researches show that task-oriented communication is transmitting the semantics of the source information with respect to the requirements of tasks, and thus it has great potential to reduce the network traffic and thus alleviate spectrum shortage\cite{Qin_survey}. 

In the existing researches, semantic information is defined as the meaning underlying the raw data\cite{Wang_Attention}, which is always abstract and subjective. Due to the subjective nature of semantic information, the same data may have different semantics in various intelligent tasks, and thus semantics is always highly related to the task. Correspondingly, semantic compression also depends on the intelligent task, which is challenging due to the lack of a unified compression criterion. In addition, employing semantic communications in wireless networks faces several challenges, including the semantic theory, semantic extraction method, semantic-oriented resource allocation, and the performance metrics, which motivates us to investigate more in this area.

Some prior studies have been dedicated to designing fundamental frameworks for semantic communication from the informative-theoretical perspective\cite{framework_Kalfa,framework_Uysal,framework_Kountouris,framework_Seo,framework_Lan,framework_Yang,framework_Shi,framework_Zhang,Guler_game}. In \cite{framework_Kalfa}, the authors discussed semantic transformations of different sources for popular tasks in the field and presented the semantic communication system design for different types of sources. An envisioned end-to-end (E2E) architecture of semantic communication framework was proposed in \cite{framework_Kountouris}, which introduces the semantic sampling that allows each smart device to control its traffic via semantic-aware active sampling. The authors in \cite{framework_Lan} classified semantic communications into human-to-human (Level 2), human-to-machine (Level 2 and Level 3), and machine-to-machine (Level 3) communications. In \cite{framework_Yang}, the framework of task-oriented semantic communication was proposed. The authors in \cite{Guler_game} proposed an E2E learning-driven architecture of semantic communication to integrate the semantic inference and physical layer communication problems, where the transceiver is optimized jointly to reach Nash equilibrium while minimizing the average semantic errors. Recently, deep learning (DL) has emerged as a popular solution for semantic communications due to its powerful feature extraction capability \cite{Farsad,Xie_Deep,Xie_lite,Gunduz_JSCC,Lee,retrival1,retrival2,Liu_AIoT,MU-DeepSC1,MU-DeepSC2}.
Farsad \emph{et al.}\cite{Farsad} developed a long short-term memory (LSTM) enabled joint source-channel coding (JSCC) for the transmission of text data. It shows the great potential of DL-enabled JSCC compared to the conventional communication system. The authors in \cite{Xie_Deep} proposed a semantic communication system based on Transformer, which clarified the concept of semantic information at the sentence level. Based on \cite{Xie_Deep}, the authors in \cite{Xie_lite} further proposed a lite distributed semantic communication system, making the model easier to deploy on the Internet of things (IoT) devices. For image data transmission, the authors in \cite{Gunduz_JSCC} presented a JSCC scheme based on convolutional neural networks (CNN) to transmit image data over wireless channel, which can jointly optimize various modules of the communication system. 

More recently, DL-driven communication architecture considering the semantics of specific tasks has been proposed\cite{Lee,retrival1,retrival2,Liu_AIoT,MU-DeepSC1,MU-DeepSC2,Shao_IB}. Lee \emph{et al.}\cite{Lee} designed a joint transmission-classification system for images, in which the receiver outputs image classification results directly. It has been verified that such a joint design achieved higher classification accuracy than performing image recovery and classification separately. Jankowski \emph{et al.}\cite{retrival1,retrival2} considered image-based re-identification for persons or cars as the communication task, where two schemes were proposed to improve the retrieval accuracy. In our prior work \cite{Liu_AIoT}, an intelligent task-oriented communication method has been proposed for AI of Things (AIoT), in which semantics can be further compressed without performance penalty. For multimodal data transmission, Xie \emph{et al.} \cite{MU-DeepSC1} developed MU-DeepSC for the visual question answering task, where one user transmits text-based questions about images, and the inquiry images are transmitted from another user. Based on MU-DeepSC, a Transformer based framework\cite{MU-DeepSC2} has been developed as a unique structure for serving different tasks. Various tasks have been tested in \cite{MU-DeepSC2} to show its superiority. In summary, the existing works on semantic communications are focused on the implementation of semantic communication systems, in which extracted semantics are compressed via a fixed neural network or directly transmitted without further compression.

Semantic compression aims to lessen the subsequent computing overhead and reduce the amount of transmitted data, consequently reducing the communication burden. Even though the reception data can be reduced by semantic compression, the performance of semantic communication is still restricted due to the neglected management of limited wireless resources. Therefore, it is necessary to study resource allocation policy further to improve semantic communication performance. Furthermore, since users may require transmission service for various intelligent tasks \cite{SLA}, appropriate resource allocation in a semantic aware manner is crucial that guarantees the transmission with prioritized reliabilities. To optimize the performance of semantic communications, the following major issues remain to be solved: 1) \emph{How to adaptively compress semantics with respect to the intelligent tasks?} 2) \emph{How to appropriately allocate communication resources (including bandwidth and transmit power) for compressed data?}


In this paper, we investigate the performance optimization for task-oriented multi-user semantic communication systems. Moreover, two fundamental problems, semantic compression and resource allocation, are studied and solved to improve the performance of semantic communications. To our best knowledge, this is the first work that proposes a theoretical model of semantic compression and resource allocation for task-oriented multi-user semantic communications. The main contributions of this paper are summarized as follows:

\begin{itemize}
	\item[$\bullet$] We design a novel framework for task-oriented multi-user semantic communications that enables users to extract, compress, and transmit the semantics of the raw data effectively to the edge server. The edge server then executes the intelligent task and returns results to users based on the received semantics.
	\item[$\bullet$] An adaptable semantic compression (ASC) approach is proposed to compress extracted semantics based on semantic importance to reduce the communication burden. To complete the ASC, we propose a gradient-based semantic importance evaluation method. The mathematical relationship between the performance of intelligent tasks and semantic compression ratios is then investigated. 
	\item[$\bullet$] Due to wireless resource limitations, users must adaptively determine the optimal semantic compression ratios, and wireless resources must be appropriately allocated to satisfy the transmission delay constraint. The problem is formulated as an optimization problem whose goal is to maximize the success probability of tasks in terms of resource allocation, user selection, and semantic compression ratios. To solve this nonconvex problem, a compression ratio and resource allocation (CRRA) algorithm is proposed for scenarios where users have the same service levels, in which the problem is separated into two subproblems and solved iteratively. 
	\item[$\bullet$] Considering that users have various service levels, we further propose a CRRA with dynamic user selection (CRRAUS) algorithm, in which compression ratio, user selection, and resource allocation are simultaneously optimized. Specifically, the branch and bound method is used to select users based on resources and service levels adaptively.
\end{itemize}
	

The remainder of this paper is organized as follows. The system model and problem formulation are described in Section \ref{sec:system}. Section \ref{sec:details} details the E2E semantic communication and ASC approach. The proposed CRRA algorithm and CRRAUS algorithm are presented in Sections \ref{sec:CRRA} and \ref{sec:CRRAUS}, respectively. Simulation and numerical results are analyzed in Section \ref{sec:simulation}. Section \ref{sec:conclusion} draws some important conclusions.


\vspace{-0.2cm}
\section{System model and Problem formulation}
\label{sec:system}
%
%
We consider a multi-user semantic communication system composed of an edge server and a set ${{\cal U}}$ of $U$ users as illustrated in Fig. \ref{fig:model}. The user aims at gathering data locally and performing an inference task with the assistance of the edge server. To do so, the semantics of raw data is extracted and compressed locally by users before uploading. Then, the semantics is transmitted to the edge server in a scheduled manner, where the edge server allocates channel resource according to channel state information as well as the prior knowledge of semantic compression. Finally, the edge server performs intelligent computing according to the received semantics and returns the result of tasks to users. The users and edge server are equipped with a certain knowledge base to facilitate semantic extraction and compression, where the knowledge base could be different for various applications. In the following, we first introduce the architecture of the end-to-end communication model between a user and the server. Then, we formulate an optimization problem by allocating resources to maximize the performance of semantic communications.

\begin{figure}[t]
	\begin{center}
		\includegraphics[width=1\linewidth]{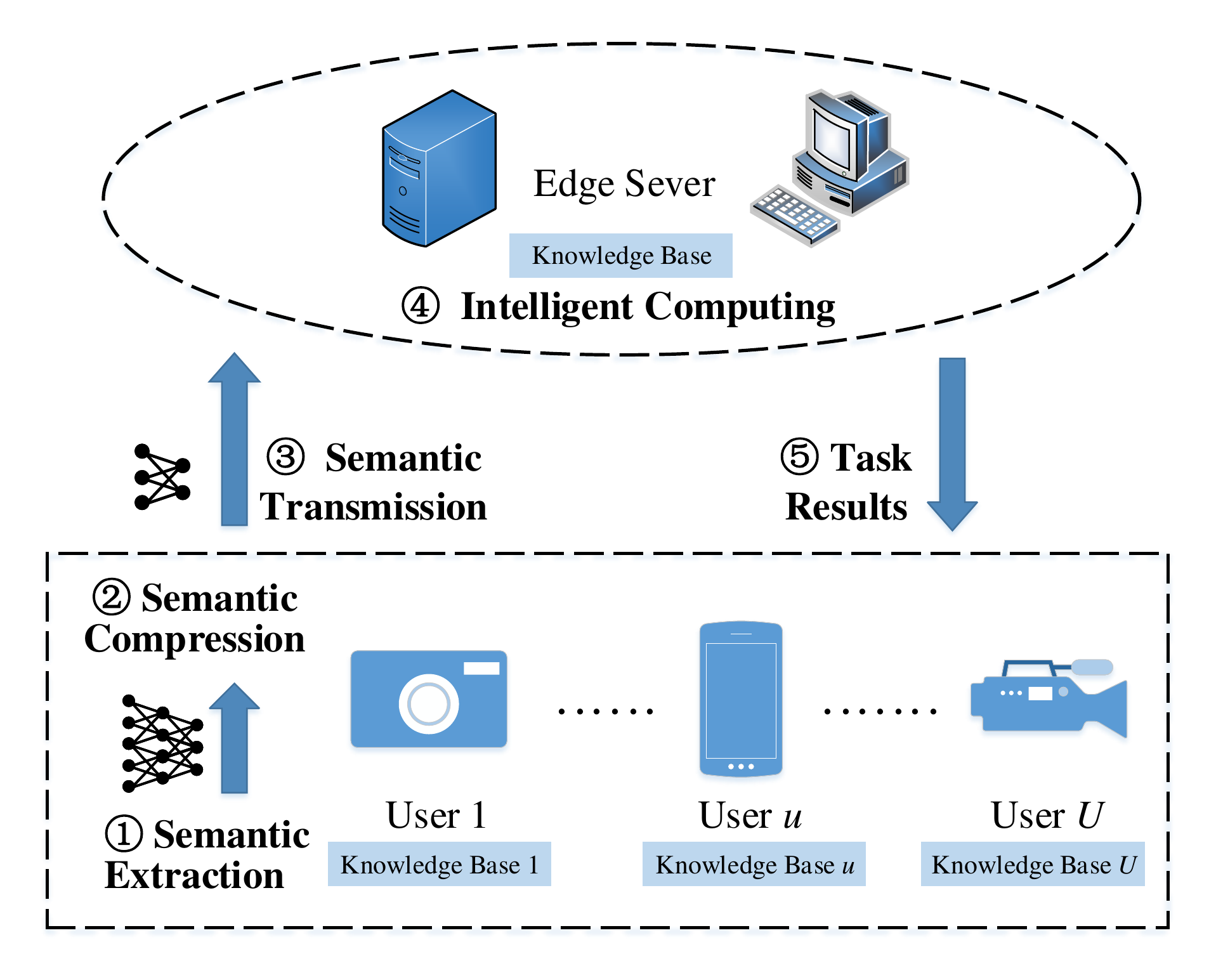}
	\end{center}
	\caption{The structure of task-oriented multi-user semantic communication network.}
	\label{fig:model}
\end{figure}

\subsection{E2E Task-Oriented Semantic Communication System}
\begin{figure*}[htbp]
	\begin{center}
		\includegraphics[width=1\linewidth]{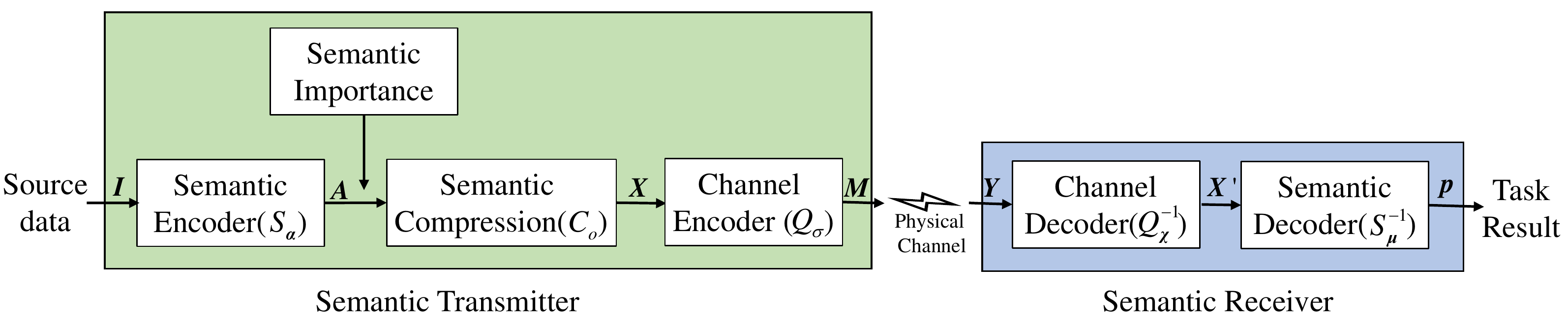}
	\end{center}
	\caption{The framework of proposed task-oriented semantic communication system.}
	\label{fig:structure}
\end{figure*}

We consider an E2E semantic communication system constructed by a neural network architecture as shown in Fig. \ref{fig:structure}. Specifically, the transmitter consists of a semantic encoder to extract the semantic features from the source data, a semantic compression model to compress the semantics to reduce the amount of the transmitted data based on semantic importance, and a channel encoder to generate symbols to facilitate the transmission subsequently. The receiver is composited with a channel decoder for symbol detection and a semantic decoder with the output of semantic concepts with respect to the tasks.

At semantic transmitter, neural networks are first utilized to extract the semantic information from source data ${\boldsymbol I}$, which can be denoted by
\begin{eqnarray}\label{signalform}
	{\boldsymbol {A}} = {S_{\boldsymbol {\alpha }}}({\boldsymbol {I}}),
\end{eqnarray}
where ${S_{\boldsymbol{\alpha }}}(\cdot)$ denotes the semantic encoder network with parameter set ${\boldsymbol{\alpha }}$. The extracted semantics can be a series of semantic features.

Then, the semantic features are compressed by
\begin{eqnarray}\label{}
	{\boldsymbol{X}} = {C_o }({\boldsymbol{A}}),
\end{eqnarray}
where ${C_o }(\cdot)$ denotes the ASC function, $o$ is the compression ratio. 
The compression procedure is illustrated in the following definition.
\begin{definition}\label{sc}
	To unify the ASC expressions for various semantic communication systems, we define the process of ASC as
	\begin{eqnarray}\label{compression_equation}
		{{\boldsymbol{X}}^{_k}} = \left\{ {\begin{array}{*{20}{c}}
				{{{\boldsymbol{A}}{^k}},\;\;{\omega _k^c} \ge {\omega _0}}\\
				{0,\;\;\;\;\;{\omega _k^c} < {\omega _0}}
		\end{array}} \right.
	\end{eqnarray}
	where ${{\boldsymbol{A}}^{_k}}$ is the $k$-th semantic feature and ${\omega _k^c}$ is the importance weight of $k$-th feature for semantic concept $c$, which will be detailed in Section \ref{subsec:evaluation}. ${\omega _0}$ is the importance weight threshold that determined by the compression ratio. 
\end{definition}
Equation (\ref{compression_equation}) indicates that if a semantic feature's importance weight exceeds the threshold, it will be transmitted; otherwise, it will be discarded. Existing semantic communication systems can be regarded as a special case when ${\omega _0} =0$.  

Next, the compressed semantics is encoded by channel encoder to generate symbols for transmission, which can be denoted by
\begin{eqnarray}\label{}
	{\boldsymbol{M}} = {Q_{\boldsymbol{\sigma}}}({\boldsymbol{X}}),
\end{eqnarray}
where $Q_{\boldsymbol{\sigma}}(\cdot)$ denotes the channel encoder network with parameter set ${\boldsymbol{\sigma}}$.

Then, the encoded symbols are transmitted via a wireless channel, and the received signal is expressed as
\begin{eqnarray}\label{}
{\boldsymbol{Y}} = h{\boldsymbol{M}} + \boldsymbol{n},
\end{eqnarray}
where $h$ denotes the channel gain, $\boldsymbol{n}$ is a vector sampled from Gaussian distribution.

We consider the transmitted symbols are mapped into bits by binary quantization, and thus transmission in the physical layer still follows Shannon's classic information theory, and the transmission rate of user $i$ is
\begin{eqnarray}\label{R}
	{R_i} = {B_i}{\rm{lo}}{{\rm{g}}_{\rm{2}}}{\rm{(1 + }}\frac{{{h_i}{P_i}}}{{{N_0}{B_i}}}{\rm{)}},
\end{eqnarray}
where $B_i$ is the bandwidth of user $i$, $P_i$ is the transmission power of user $i$, $h_i$ is the channel gain between user $i$ and edge server, and $N_0$ is the noise power spectral density.

Denoting the initial data size of semantic information that users extracts is $d_0$, and the semantic compression ratio of user $i$ is $o_i$, the amount of data actually transmitted by user $i$ is ${d_i} = {d_0} \times ({1-o_i})$. Therefore, the transmission delay of user $i$ is 
\begin{eqnarray}\label{t}
{t_i} = \frac{{{d_i}}}{{{R_i}}}.
\end{eqnarray}

In actual scenarios (e.g., Internet of Vehicles (IoV)), a large number of tasks are latency-sensitive and thus there is always a strict transmission delay constraint, which can be denoted by $t_0$. Thus, the success transmission probability of user $i$ is ${\rm{P}}({t_i} \le {t_0})$. To calculate ${\rm{P}}({t_i} \le {t_0})$, we have the following lemma.

\noindent\textbf{$Lemma\ 1.$} The success transmission probability of user $i$ is 
\begin{align}\label{transmission}
	P({t_i} \le {t_0}) = 2Q\left( {\frac{{{2^{{a_i}(1 - {o_i})}} - 1}}{{{b_i}\delta }}} \right)
\end{align}
where ${a_i} = \frac{{{d_0}}}{{{B_i}{t_0}}}$, ${b_i} = \frac{{{P_i}}}{{{N_0}{B_i}}}$ and ${\delta ^2}$ is variance of the channel gain. The Q-function is the tail distribution function of the standard normal distribution.

\begin{proof}
	Please see Appendix \ref{proof_lemma1}.
\end{proof}
\begin{remark}
	As we observe from Lemma 1, the success transmission probability is mainly affected by power, bandwidth, and semantic compression ratio. Therefore, the success transmission probability can be improved by optimizing the semantic compression ratio and resource allocation.
\end{remark}

Then, received symbols are decoded to recover semantics via channel decoder, which can be expressed as
\begin{eqnarray}\label{}
	{{{\boldsymbol{X}}'}} = {Q_{\boldsymbol{\chi }}^{ - 1}}({\boldsymbol{Y}}),
\end{eqnarray}
where ${Q_{\boldsymbol{\chi }}^{ - 1}}(\cdot)$ denotes the channel decoder network with parameter set ${\boldsymbol{\chi }}$.

Finally, the semantic receiver inputs the recovered semantics ${{\boldsymbol{X}}'}$ into semantic decoder to complete the intelligent tasks. Specifically, the output is
\begin{eqnarray}\label{}
	{\boldsymbol{p}} = {S_{\boldsymbol{\mu }}^{ - 1}}({\boldsymbol{{{\boldsymbol{X}}'}}}),
\end{eqnarray}
where ${\boldsymbol{p}}$ is the task result, which will be returned to the transmitter and ${S_{\boldsymbol{\mu }}^{ - 1}}(\cdot)$ denotes the semantic decoder with the parameter set ${\boldsymbol{\mu }}$. 


\subsection{Problem Formulation}
In task-oriented semantic communications, conventional communication metrics that ignore the underlying meaning of the source are no longer applicable, and thus new performance metrics need to be investigated at the semantic level. To simultaneously evaluate the impact of transmission and ASC on the performance of semantic communications, we define a novel metric, namely the success probability of tasks, which is expressed in the following definition.
\begin{definition}\label{sc}
	The success probability of tasks of $i$-th user can be expressed as
	\begin{equation}\label{EA}
	    \Phi_i = \eta ({o_i}) \times P({t_i} \le {t_0}).
	\end{equation}
	where $\eta ({o_i})$ is the probability that task is successfully executed under success transmission, while the compression ratio is $o_i$.
\end{definition}
From (\ref{EA}), we see that the proposed success probability of tasks used to evaluate the semantic communication performance can control the tradeoff between the semantic transmission and the semantic understanding. We consider service level agreement (SLA), where users are prioritized with different service levels according to their objectives. For example, in the smart factory scenario, users who perform fire detection are typically characterized by a higher service level due to the requirements of high-reliable and low-latency communication \cite{3gpp}. Note that SLA is a general method existing in most modern cellular systems, such as 5G\cite{3gpp}. Consider a set ${{\cal N}}$ of $N$ service levels according to users' tasks. Let the weight of service level $n$ be ${{\varepsilon _n}}$ and ${r_{in}} \in \left\{ {0,1} \right\}$ denote the user association index, i.e., ${r_{in}} = 1$ means that the user $i$ belongs to service level $n$; otherwise, we have ${r_{in}} = 0$. Therefore, the importance weight of user $i$ is expressed as
\begin{eqnarray}
	{w_i} = \sum\limits_{n = 1}^N {{\varepsilon _n}{r_{in}}}.
\end{eqnarray}
Considering SLA, the weighted sum success probability of tasks of the whole semantic communication system is expressed as
\begin{equation}
	\Phi = \sum\limits_{i = 1}^U {{\beta _i}{w_i}{\Phi _i}}.
\end{equation}
where ${\beta _i} \in \left\{ {0,1} \right\}$ denotes the user selection index, i.e., ${\beta _i} = 1$ indicates user $i$ is selected; while otherwise, we have ${\beta _i} = 0$. It is extremely necessary to ensure the performance of users with higher priority, so only a subset of users may be selected to complete intelligent tasks due to the limited wireless resources.

We aim to optimize the resource allocation, compression ratios, and user selection simultaneously to maximize the weighted sum success probability of tasks of the semantic system under the resource constraints. Mathematically, we formulate the optimization problem as
\begin{align}\label{Q1}
		& \mathop {\max }\limits_{{\boldsymbol{B}},{\boldsymbol{P}},{\boldsymbol{o}},{\boldsymbol{\beta}}} \Phi \\
		\rm{s.t.}\;\; & \sum\limits_{i = 1}^U {{\beta _i}{B_i}}  \le {B_{\max }}, \tag{\theequation a}\\
		&{B_i} \ge {B_{\min }}, \forall i \in {{\cal U}_s}, \tag{\theequation b}\\
		& \sum\limits_{i = 1}^U {{\beta _i}{P_i}}  \le {P_{\max }}, \tag{\theequation c}\\
		&{P_i} \ge {P_{\min }}, \forall i \in {{\cal U}_s}, \tag{\theequation d}\\
		&0 < {o_i} < 1, \forall i \in {{\cal U}},\tag{\theequation e}\\
		& {\beta _i} \in \left\{ {0,1} \right\},\forall i \in {{\cal U}}, \tag{\theequation f}\\
		& \sum\limits_n {{r_{in}} = 1},\forall i \in {{\cal U}}, \tag{\theequation g}
\end{align}
where ${\boldsymbol{\beta }} = [{\beta _1},{\beta _2}, \cdots ,{\beta _i}, \cdots {\beta _U}]$, ${{\cal U}_s}$ is the set of selected users, $B_{\min }$ is the minimum bandwidth allocated to users, $B_{\max }$ is the maximum total bandwidth, $P_{\min }$ is the minimum transmit power allocated to users, and $P_{\max }$ is the maximum total transmit power. Constraint (\ref{Q1}a) indicates that the sum bandwidth of selected users cannot exceed a given threshold, which refers to the system bandwidth. Constraints (\ref{Q1}b) and (\ref{Q1}d) are the minimum bandwidth and the minimum transmit power constraints, respectively. Constraint (\ref{Q1}c) indicates that the sum transmit power of selected users cannot exceed a given value, which guarantees that the energy consumption of the whole system is limited. Constraint (\ref{Q1}e) is the compression ratio constraint. Constraint (\ref{Q1}g) indicates that each user can only belong to one service level. The main notations of this paper are summarized in Table \ref{tab:notation}.

\begin{table}[t]
	\normalsize
	\centering
	\caption{List of Notations}
	\rowcolors{1}{blue!15}{}
	\label{tab:notation}
	\begin{tabular}{|c|m{6.4cm}<{\centering}|}
		\hline
		\textbf{Notation} & \textbf{Description} \\
		\hline
		$U$ & Number of users\\
		\hline
		$N$ & Number of service levels\\
		\hline
		${{\varepsilon _n}}$ & Weight of service level $n$\\
		\hline
		${r_{in}}$ & User association index\\
		\hline
		$w_i$ & Importance weight of user $i$\\
		\hline
		$\omega _k^c$ & Importance weight of $k$ semantic feature\\
		\hline
		$\boldsymbol{p}$ & Task result\\
		\hline
		$\boldsymbol{y_l}$ & Task label\\
		\hline
		$\kappa$ & Weight for the mutual information\\
		\hline
		$R_i$ & Transmission rate of user $i$\\
		\hline
		$B_i$ & Bandwidth of user $i$\\
		\hline
		$P_i$ & Transmission power of user $i$\\
		\hline
		$N_0$ & Noise power spectral density\\
		\hline
 		${{d_0}}$ & Initial data size of extracted semantics \\
 		\hline
 		$o_i$ & Semantic compression ratio of user $i$\\
 		\hline
 		$t_i$ & Transmission delay of user $i$\\
 		\hline
 		$t_0$ & Transmission delay constraint\\
 		\hline
 		${\Phi _i}$ & success probability of tasks of user $i$\\
 		\hline
 		$\Phi$ & success probability of tasks of the whole semantic communication system\\
 		\hline
 		$B_{\min }$ & Minimum bandwidth allocated to users\\
 		\hline
 		$B_{\max }$ & Maximum total bandwidth\\
 		\hline
 		$P_{\min }$ & Minimum transmit power allocated to users\\
 		\hline
 		$P_{\max }$ & Maximum total transmit power\\
 		\hline
 		$\boldsymbol{\beta}$ & User selection vector\\
		\toprule
	\end{tabular}
	\vspace{-0.cm}
\end{table}

\vspace{-0.cm}
\section{E2E Semantic Communication and ASC}
\label{sec:details}
In this section, we first detail the architecture and loss function of the proposed E2E semantic communication system. Then, we illustrate the method to evaluate the importance of semantic features, which is the basis of the ASC approach. Finally, we investigate the relationship between intelligent task performance and compression ratio by a numerical method.
\subsection{E2E Semantic Communication Network Design}
\label{subsec:Implementation}
In this work, we target various intelligent tasks including source data with complex features, such as image recognition. To capture and transmit the meaningful semantics of those source data, deep neural networks (DNNs) are used to implement the joint encoder-decoder framework. The superiority of DNNs in encoding and decoding has been verified that they can outstandingly support the transmission of different types of source data\cite{Xie_Deep,retrival1}.
There are three types of DNNs used for semantic encoder and decoder mostly, including recurrent neural networks (RNN), CNN, and fully-connected neural networks (FCN). In general, CNN is more suitable for image data, while RNN is more suitable for time-series data (e.g., audio and text). 
The DNN-based E2E semantic communication framework comprises four parts: semantic encoder, channel encoder, channel decoder, and semantic decoder. The first and second are employed in the transmitter that identifies the task-relevant features from the raw data and maps the feature values to the channel input symbols, respectively. The third and the last are employed in the receiver, aiming at symbol detection and semantic reconstruction, respectively. 

The semantic transceiver is jointly trained in an end-to-end manner, where the gradients require to be backpropagated from the output layer of the semantic decoder to the input layer of the semantic encoder. To do so, the communication channel also needs to be modeled by a neural network and employed between models of channel encoder and decoder. In this work, we focus on the ASC and the following semantic resource allocation, thus considering the basic additive white Gaussian noise (AWGN) channel similar to \cite{Xie_Deep}. To achieve ASC, the DNN-based semantic encoder and decoder are trained according to a loss function that can intuitively represent the semantic importance of the feature. The loss function will be presented in the next part.

\subsection{Loss Function Design}
\label{subsec:loss}
The important goal of designing a task-oriented semantic communication system is to maximize the intelligent task performance and the capacity or the data transmission rate simultaneously. Compared with the bit error rate, the mutual information can provide extra information to train a transceiver\cite{Xie_Deep}. The mutual information of the transmitted semantics, $\boldsymbol{X}$, and the received semantics, $\boldsymbol{Y}$, can be computed by
\begin{eqnarray}\label{}
	\begin{aligned}
		I\left( {{\boldsymbol{X}};{\boldsymbol{Y}}} \right) = &{{\rm E}_{p\left( {{\boldsymbol{x}},{\boldsymbol{y}}} \right)}}\left[ {\log \frac{{p\left( {{\boldsymbol{x}},{\boldsymbol{y}}} \right)}}{{p\left( {\boldsymbol{x}} \right)p\left( {\boldsymbol{y}} \right)}}} \right]\\
		= &{{\rm E}_{p\left( {{\boldsymbol{x}},{\boldsymbol{y}}} \right)}}\left[ {\log p\left( {{\boldsymbol{y}}\left| {\boldsymbol{x}} \right.} \right) - \log p\left( {\boldsymbol{y}} \right)} \right],
	\end{aligned}
\end{eqnarray}
where $\left( {{\boldsymbol{X}},{\boldsymbol{Y}}} \right)$ is a pair of random variables with values over the space ${{\cal X}} \times {{\cal Y}}$, where ${\cal X}$ and ${\cal Y}$ are the spaces for ${\boldsymbol{X}}$ and ${\boldsymbol{Y}}$. $p\left( {\boldsymbol{x}} \right)$ and $p\left( {\boldsymbol{y}} \right)$ are the marginal probability of sent ${\boldsymbol{X}}$ and received ${\boldsymbol{Y}}$, respectively, and $p\left( {{\boldsymbol{x}},{\boldsymbol{y}}} \right)$ is the joint probability of ${\boldsymbol{X}}$ and ${\boldsymbol{Y}}$. To effectively estimate the mutual information of ${\boldsymbol{X}}$ and ${\boldsymbol{Y}}$, the following theorem is presented.
\begin{theorem}\label{mi}
	The upper bound of the mutual information can be expressed as
	\begin{eqnarray}\label{up}
		\begin{aligned}
		{I_{{\rm{up}}}}\left( {{\boldsymbol{X}};{\boldsymbol{Y}}} \right):=&{{\rm{E}}_{p\left( {{\boldsymbol{x}},{\boldsymbol{y}}} \right)}}\left[ {\log p\left( {{\boldsymbol{y}}\left| {\boldsymbol{x}} \right.} \right)} \right]\\
		-&{{\rm{E}}_{p\left( {\boldsymbol{x}} \right)}}{{\rm{E}}_{p\left( {\boldsymbol{y}} \right)}}\left[ {\log p\left( {{\boldsymbol{y}}\left| {\boldsymbol{x}} \right.} \right)} \right]
		\end{aligned}
	\end{eqnarray}
\end{theorem}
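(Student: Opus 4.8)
The plan is to establish the stated identity as a genuine upper bound, i.e.\ to show that $I_{\mathrm{up}}(X;Y)\ge I(X;Y)$ and to exhibit the gap between them as a nonnegative divergence. Everything reduces to comparing the second terms of $I$ and $I_{\mathrm{up}}$, since their first terms coincide.

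First I would observe that the leading term ${\rm E}_{p(x,y)}[\log p(y|x)]$ appears identically in both the given expansion of $I(X;Y)$ and the definition of $I_{\mathrm{up}}$. Because $\log p(y)$ does not depend on $x$, marginalizing over $x$ gives ${\rm E}_{p(x,y)}[\log p(y)]={\rm E}_{p(y)}[\log p(y)]$, so the true mutual information reads $I(X;Y)={\rm E}_{p(x,y)}[\log p(y|x)]-{\rm E}_{p(y)}[\log p(y)]$. Subtracting this from the definition of $I_{\mathrm{up}}$ cancels the common term and leaves the gap $I_{\mathrm{up}}(X;Y)-I(X;Y)={\rm E}_{p(y)}[\log p(y)]-{\rm E}_{p(x)}{\rm E}_{p(y)}[\log p(y|x)]$.

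Next I would rewrite this gap as a single expectation over the product measure $p(x)p(y)$. Using $\int p(x)\,dx=1$ to lift the first term into a double integral, the gap becomes ${\rm E}_{p(x)p(y)}\bigl[\log p(y)-\log p(y|x)\bigr]={\rm E}_{p(x)p(y)}\bigl[\log\tfrac{p(y)}{p(y|x)}\bigr]$. The key algebraic step is the marginalization identity $p(x)p(y|x)=p(x,y)$, which lets me recognize this quantity as the Kullback--Leibler divergence $D_{\mathrm{KL}}\bigl(p(x)p(y)\,\|\,p(x,y)\bigr)$ between the product of the marginals and the joint distribution, i.e.\ the reversed-direction counterpart of the divergence that defines $I(X;Y)$ itself.

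Finally I would invoke nonnegativity of the KL divergence (Gibbs' inequality, a consequence of Jensen's inequality applied to the concave logarithm) to conclude that the gap is $\ge 0$, hence $I_{\mathrm{up}}(X;Y)\ge I(X;Y)$. A more self-contained route that avoids naming the divergence is to apply Jensen directly: for each fixed $y$, concavity of $\log$ gives ${\rm E}_{p(x)}[\log p(y|x)]\le \log {\rm E}_{p(x)}[p(y|x)]=\log p(y)$, where the last equality uses ${\rm E}_{p(x)}[p(y|x)]=\int p(x)p(y|x)\,dx=p(y)$; integrating against $p(y)$ yields exactly the required inequality. I expect the main obstacle to be bookkeeping rather than analytic difficulty: one must correctly collapse the mixed expectation ${\rm E}_{p(x)}{\rm E}_{p(y)}[\cdot]$ and insert the marginalization identity in the right place so that the gap reveals its reversed-KL structure, and getting the direction of Jensen's inequality right is precisely what guarantees an upper rather than a lower bound.
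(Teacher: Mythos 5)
Your proof is correct and follows essentially the same route as the paper's: cancel the common term ${\rm E}_{p(\boldsymbol{x},\boldsymbol{y})}[\log p(\boldsymbol{y}|\boldsymbol{x})]$, reduce the gap to ${\rm E}_{p(\boldsymbol{y})}\bigl[\log {\rm E}_{p(\boldsymbol{x})}[p(\boldsymbol{y}|\boldsymbol{x})]-{\rm E}_{p(\boldsymbol{x})}[\log p(\boldsymbol{y}|\boldsymbol{x})]\bigr]$, and conclude by Jensen's inequality applied to the concave logarithm. Your identification of the gap as the reversed KL divergence $D_{\mathrm{KL}}\bigl(p(\boldsymbol{x})p(\boldsymbol{y})\,\|\,p(\boldsymbol{x},\boldsymbol{y})\bigr)$ is a valid and slightly more conceptual repackaging of the same step, and your ``self-contained route'' at the end is exactly the paper's argument.
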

\begin{proof}
	In order to prove that ${I_{{\rm{up}}}}\left( {{\boldsymbol{X}};{\boldsymbol{Y}}} \right)$ is the upper bound of mutual information, it is only necessary to prove that ${I_{{\rm{up}}}}\left( {{\boldsymbol{X}};{\boldsymbol{Y}}} \right)$ is greater than the true mutual information $I\left( {{\boldsymbol{X}};{\boldsymbol{Y}}} \right)$. The gap between the true mutual information and the upper bound can be denoted by
	\begin{align}\label{}
			\Delta : = &{I_{{\rm{up}}}}\left( {{\boldsymbol{X}};{\boldsymbol{Y}}} \right) - I\left( {{\boldsymbol{X}};{\boldsymbol{Y}}} \right)\nonumber\\
			= &{{\rm{E}}_{p\left( {{\boldsymbol{x}},{\boldsymbol{y}}} \right)}}\left[ {\log p\left( {{\boldsymbol{y}}\left| {\boldsymbol{x}} \right.} \right)} \right] - {{\rm{E}}_{p\left( {\boldsymbol{x}} \right)}}{{\rm{E}}_{p\left( {\boldsymbol{y}} \right)}}\left[ {\log p\left( {{\boldsymbol{y}}\left| {\boldsymbol{x}} \right.} \right)} \right]\nonumber\\
			- &{{\rm E}_{p\left( {{\boldsymbol{x}},{\boldsymbol{y}}} \right)}}\left[ {\log p\left( {{\boldsymbol{y}}\left| {\boldsymbol{x}} \right.} \right) - \log p\left( {\boldsymbol{y}} \right)} \right]\nonumber\\
			= &{{\rm{E}}_{p\left( {{\boldsymbol{x}},{\boldsymbol{y}}} \right)}}\left[ {\log p\left( {\boldsymbol{y}} \right)} \right] - {{\rm{E}}_{p\left( {\boldsymbol{x}} \right)}}{{\rm{E}}_{p\left( {\boldsymbol{y}} \right)}}\left[ {\log p\left( {{\boldsymbol{y}}\left| {\boldsymbol{x}} \right.} \right)} \right]\nonumber\\
			= &{{\rm{E}}_{p\left( {\boldsymbol{y}} \right)}}\left[ {\log p\left( {\boldsymbol{y}} \right) - {{\rm{E}}_{p\left( {\boldsymbol{x}} \right)}}\left[ {\log p\left( {{\boldsymbol{y}}\left| {\boldsymbol{x}} \right.} \right)} \right]} \right]\nonumber\\
			= &{{\rm{E}}_{p\left( {\boldsymbol{y}} \right)}}\left[ {\log \left[ {{{\rm{E}}_{p({\boldsymbol{x}})}}\left[ {p\left( {{\boldsymbol{y}}\left| {\boldsymbol{x}} \right.} \right)} \right]} \right] - {{\rm{E}}_{p\left( {\boldsymbol{x}} \right)}}\left[ {\log p\left( {{\boldsymbol{y}}\left| {\boldsymbol{x}} \right.} \right)} \right]} \right] \ge 0
	\end{align}
	The last step is derived from Jensen's inequality. This completes the proof of Theorem \ref{mi}.
\end{proof}
However, the conditional relation $p\left( {{\boldsymbol{y}}\left| {\boldsymbol{x}} \right.} \right)$ between variables in Theorem \ref{mi} is unavailable, and a variational distribution ${q_\theta }({\boldsymbol{y}}\left| {\boldsymbol{x}} \right.)$ with parameter $\theta$ is used to  approximate $p\left( {{\boldsymbol{y}}\left| {\boldsymbol{x}} \right.} \right)$. According to the Theorem 3.2 in \cite{club}, minimizing the upper bound on mutual information is equivalent to minimizing $ - {{\rm{E}}_{p\left( {{\boldsymbol{x}},{\boldsymbol{y}}} \right)}}\left[ {\log {q_\theta }\left( {{\boldsymbol{y}}\left| {\boldsymbol{x}} \right.} \right)} \right]$. With samples $\left\{ {\left( {{{\boldsymbol{x}}_i},{{\boldsymbol{y}}_i}} \right)} \right\}_{i = 1}^L$, we can minimize the log-likelihood function ${L_{{\rm{MI}}}}(\theta ): =  - \frac{1}{L}\sum\limits_{i = 1}^L {\log {q_\theta }({{\boldsymbol{y}}_i}\left| {{{\boldsymbol{x}}_i}} \right.)}$,  which is the unbiased estimation of $ - {{\rm{E}}_{p\left( {{\boldsymbol{x}},{\boldsymbol{y}}} \right)}}\left[ {\log {q_\theta }\left( {{\boldsymbol{y}}\left| {\boldsymbol{x}} \right.} \right)} \right]$. In this paper, the variational distribution ${q_\theta }({\boldsymbol{y}}\left| {\boldsymbol{x}} \right.)$ is implemented with neural networks and minimized via gradient-descent method.

The parameters of the semantic communication network are optimized via the following loss function
\begin{eqnarray}\label{loss}
L({\boldsymbol{y_l}},{\boldsymbol{p}};{\boldsymbol{\alpha}},{\boldsymbol{\mu}}) =  L_{\rm{T}}({\boldsymbol{y_l}},{\boldsymbol{p}})  - \kappa {I_{{\rm{up}}}}\left( {{\boldsymbol{X}};{\boldsymbol{Y}}} \right)
\end{eqnarray}
where ${\boldsymbol{y_l}}$ is the task label. The first term $L_{\rm{T}}({\boldsymbol{y_l}},{\boldsymbol{p}})$ is the loss function related to the task (i.e., cross-entropy for classification task, triplet loss for object detection task, etc.), which aims to maximize the task performance by training the whole system. The second one ${I_{{\rm{up}}}}\left( {{\boldsymbol{X}};{\boldsymbol{Y}}} \right)$ is the estimation of mutual information between ${\boldsymbol{X}}$ and ${\boldsymbol{Y}}$, which maximizes the achieved data rate during the transmitter training. Parameter $\kappa$, between 0 and 1, is the weight for the mutual information.

The training process of the proposed task-oriented semantic communication network consists of two phases due to different loss functions. After initializing the parameters, the first phase is to train the mutual information model by unsupervised learning to estimate the achieved data rate for the second phase. The second phase is to train the whole system with (\ref{loss}) as the loss function. Each phase aims to minimize the loss by gradient descent with mini-batch until the stop criterion is met, the max number of iterations is reached, or none of the terms in the loss function is decreased anymore. Notably, ASC is only performed during inference.

\subsection{Semantic Importance Evaluation}
\label{subsec:evaluation}
In task-oriented semantic communications, different semantic features are of different importance for completing intelligent tasks, and thus there are still semantic redundancies that are irrelevant to the intelligent tasks, which can be further compressed\cite{Lee_SNIP}. Here, the importance of semantics is defined as the correlation between semantics and the task. The way to measure the importance of semantic features can be variable with different semantic communication systems, and here we employ a gradient-based approach. Based on the semantic communication system trained in \ref{subsec:loss}, we first compute the gradient of the activation value for semantic concept $c$ (such as objects, properties, and actions) \cite{semantic_concept}, $y^c$ (before Softmax layer), with respect to $k$-th semantic feature activations ${{\boldsymbol{A}}^{_k}}$, i.e., ${\frac{{\partial {y^c}}}{{\partial {\boldsymbol{A}}^k}}}$. These gradients flowing back are global-average-pooled over the width and height dimensions (indexed by $i$ and $j$ respectively) to obtain the semantic importance weights
\begin{eqnarray}\label{weight}
\omega _k^c = \frac{1}{{W \times H}}\sum\limits_i {\sum\limits_j {\frac{{\partial {y^c}}}{{\partial {\boldsymbol{A}}_{ij}^k}}} }
\end{eqnarray}
where $W$ and $H$ are the width and height of ${{\boldsymbol{A}}^{_k}}$, and ${\boldsymbol{A}}_{ij}^k$ is the activation value at the $i$-th row and the $j$-th column of the feature map. During computation of $\omega _k^c$ while backpropagating gradients with respect to activations, the exact computation amounts to successive matrix products of the weight matrices and the gradient with respect to activation functions till the final convolution layer that the gradients are being propagated to. Hence, this weight $\omega _k^c$ represents a partial linearization of the deep network downstream from ${{\boldsymbol{A}}}$, and captures the ‘\emph{semantic importance}’ of semantic feature $k$ for a semantic concept $c$\cite{CAM}.

Since the importance weights are only related to network parameters, these weights can be regarded as shared knowledge and stored in the knowledge base of the sender and receiver, where the knowledge base could be different for various tasks. Consequently, there is no need to transmit the indices corresponding to the transmitted feature maps in the subsequent semantic communication process. In this work, we only calculate the semantic importance weights via a gradient-based method. One can easily extend the proposed ASC method to other calculation methods such as attention-based mechanisms\cite{Wang_Attention}. Based on the obtained semantic importance weights, ASC can be performed consequently. ASC proposed in this paper has two major benefits: first, it lessens the requirements of subsequent computing resources; second, it dramatically reduces the amount of the transmitted data, hence reducing the demand for communication resources and transmission delay.

\subsection{Intelligent Task Performance Model}
\label{subsec:intelligent}
To solve the problem (\ref{Q1}), we have to investigate the intelligent task performance model first, i.e., $\eta (o)$, which draws the success probability of task under compression ratio $o$ and success transmission and is closely related to the objective function $\Phi$. However, deriving a close-form expression for $\eta (o)$ is intractable due to the inexplicability of neural networks. In this subsection, we find the relationship between the semantic compression ratio and task performance by approximating a function to the statistics of the model evaluation. Note that, in practice, training the E2E model is executed in the server, thus calculating such function using a numerical approach is possible.

To obtain a point set ${{\cal D}}$ of $D$ points reflecting the mapping between task performance $\eta$ and compression ratio $o$, we first calculate the importance weights of the feature maps via (\ref{weight}), and then remove the unimportant feature maps in turn and calculate the corresponding task performance and compression ratio. Inspired by the ideas in \cite{fitting}, we empirically find that the points of ${{\cal D}}$ can be estimated by an exponential function, i.e., ${\eta}(o) = {\zeta _1}{e^{{\zeta _2}o}} + {\zeta _3}{e^{{\zeta _4}o}}$. Then, we learn the parameters ${\boldsymbol{\zeta }} = [{\zeta _1},{\zeta _2},{\zeta _3},{\zeta _4}]$ via a numerical approach, which vary with the adopted neural networks. The parameters solving algorithm based on gradient descent is summarized in \textbf{Algorithm} \ref{fitting}.


\begin{algorithm}[t]
	\normalsize
	\caption{Parameters Solving Algorithm.}
	\begin{algorithmic}[1]
		\STATE \textbf{Input:} Initialize parameters ${{\boldsymbol{\zeta }}} = [\zeta _1,\zeta _2,\zeta _3,\zeta _4]$, point set ${{\cal D}}$, step length $\delta$, threshold $L_0$.
		\REPEAT		
		\FOR {$(o^d,\eta^{d*}) \in {{\cal D}}$}
		\STATE Compute ${\eta^d}(o^d) = {\zeta _1}{e^{{\zeta _2}o^d}} + {\zeta _3}{e^{{\zeta _4}o^d}}$.
		\ENDFOR
		\STATE Compute the loss $L({\boldsymbol{\zeta }}) = \frac{1}{2D}{\sum\limits_{d = 1}^D {\left( {\eta _{}^d({o^d}) - \eta^{d*}} \right)} ^2}$.
		\STATE Compute the gradient of ${\boldsymbol{\zeta }}$: $G({\boldsymbol{\zeta }}) = \frac{{\partial L({\boldsymbol{\zeta }})}}{{\partial {\boldsymbol{\zeta }}}}$.
		\STATE Update parameters ${\boldsymbol{\zeta }}: = {\boldsymbol{\zeta }} - \delta G({\boldsymbol{\zeta }})$.
		\UNTIL {$L({\boldsymbol{\zeta }}) \le {L_0}$}
		\STATE \textbf{Output:} ${{\boldsymbol{\zeta }}} = [\zeta _1,\zeta _2,\zeta _3,\zeta _4]$.
	\end{algorithmic}
	\label{fitting}
\end{algorithm}

\section{CRRA Algorithm}
\label{sec:CRRA}
Based on the above analysis and results, we then focus on solving the problem (\ref{Q1}) in the following two sections. In this section, we consider the scenarios where users perform the tasks with similar priority (e.g., pedestrian and vehicle detection in IoV), and thus users have the same service levels, and all of them should be selected, i.e., $w_i=1, \beta _i = 1, \forall i \in \cal U$. In the considered scenarios, the optimization problem (\ref{Q1}) is first simplified to a resource allocation and compression ratios optimization problem to maximize the total success probability of tasks. Then, the CRRA algorithm is proposed to solve the optimization problem. 


Based on the approximation of $Q$-function $Q(x) \approx \frac{1}{2}{e^{ - \frac{{{x^2}}}{2}}}$ \cite{Q_bound}, problem (\ref{Q1}) can be reformulated as 
\begin{align}\label{Q3}
	&\mathop {\max }\limits_{{\boldsymbol{B}},{\boldsymbol{P}},{\boldsymbol{o}}} \sum\limits_{i = 1}^U {g_i \times } {{\eta}\left( {{o_i}} \right)}\\
	\rm{s.t.}\;\;&(\ref{Q1}a)-(\ref{Q1}f),\nonumber
\end{align}
where $g_i = {\exp\left\{{ - \frac{1}{2}{{\left[ {\frac{{{N_0}{B_i}\left[ {{2^{\left[ {\frac{{{d_0}\left( {1 - {o_i}} \right)}}{{{B_i}{t_0}}}} \right]}} - 1} \right]}}{{\delta {P_i}}}} \right]}^2}} \right\}}$.

Since the objective function is not concave, the total success probability of tasks maximization problem (\ref{Q3}) is non-convex, hence, it is generally hard to optimize resource allocation and compression ratios directly. To solve the problem (\ref{Q3}), we divide it into two subproblems and then solve these two subproblems iteratively. In particular, we first fix the resource allocation and calculate the optimal compression ratio for each user. Then, resource allocation problem is formulated and solved with the obtained compression ratios. The two subproblems are iteratively solved until a convergent solution is obtained. 

\vspace{-0.2cm}
\subsection{Optimal Compression Ratios}
Given the resource allocation policy, (\ref{Q3}) can be simplified as 
\begin{align}\label{Q4}
	&\mathop {\max }\limits_{{\boldsymbol{o}}} \sum\limits_{i = 1}^U {g_i \times } {{\eta}\left( {{o_i}} \right)}\\
	\rm{s.t.}\;\;\;&0 < {o_i} < 1,\forall i \in {{\cal U}}.\tag{\theequation a}
\end{align}

We can observe from (\ref{Q4}) that if the resource allocation policy is fixed, the optimal semantic compression ratio of each user is independent. Thus, our goal transforms into maximizing each user's success probability of tasks. For user $i$, the problem is 
\begin{align}\label{Q5}
&\mathop {\max }\limits_{{o_i}} g_i \times {{\eta}\left( {{o_i}} \right)}\\
\rm{s.t.}\;\;\;&0 < {o_i} < 1,\forall i \in {{\cal U}}.\tag{\theequation a}
\end{align}

Considering the range of $o_i$ is within 0 and 1, we here employ the one-dimension enumeration method to obtain the optimal semantic compression ratio. The algorithm for solving problem (\ref{Q4}) is summarized in \textbf{Algorithm} \ref{compression}.

\begin{algorithm}[t]
	\normalsize
	\caption{Compression ratio optimization with one-dimension enumeration method.}
	\begin{algorithmic}[1]
		\STATE \textbf{Input:} ${\boldsymbol{B}}$, ${\boldsymbol{P}}$, $h_0 = 0$.
		\FOR {i = 1:U}
		\FOR {$o_i$ = 0.01:0.01:1}
		\STATE Compute $h=g_i \times {{\eta}\left( {{o_i}} \right)}$.
		\IF {$h \ge {h_0}$}
		\STATE $h_0 = h$ and $\boldsymbol{o}_{opt}(i) = o_i$.
		\ENDIF 
		\ENDFOR
		\ENDFOR
		\STATE \textbf{Output:} $\boldsymbol{o}_{opt}$.
	\end{algorithmic}
	\label{compression}
\end{algorithm}

\vspace{-0.2cm}
\subsection{Optimal Resource Allocation}
With the obtained semantic compression ratios, we then optimize the bandwidth and power of the considered semantic communication systems. Note that given $o_i$, ${{\eta}\left( {{o_i}} \right)}$ can be seen as a constant, which is denoted by ${\alpha _i}$. Thus, the resource allocation problem can be reformulated as
\begin{align}\label{Q6}
	&\mathop {\min }\limits_{{\boldsymbol{B}},{\boldsymbol{P}}} \sum\limits_{i = 1}^U { - {\alpha _i} \times g_i} \\
	\rm{s.t.}\;\;&(\ref{Q1}a)-(\ref{Q1}d).\nonumber
\end{align}

To solve problem (\ref{Q6}), we first convert the non-convex problem into a convex optimization problem. In particular, by introducing slack variables ${\boldsymbol{f}} = [{f_1},{f_2},...,{f_U}]$, ${\boldsymbol{l}} = [{l_1},{l_2},...,{l_U}]$, ${\boldsymbol{x}} = [{x_1},{x_2},...,{x_U}]$, ${\boldsymbol{m}} = [{m_1},{m_2},...,{m_U}]$ and ${\boldsymbol{q}} = [{q_1},{q_2},...,{q_U}]$, problem (\ref{Q6}) can be transformed into
\begin{align}\label{Q7}
	&\mathop {\min }\limits_{{\boldsymbol{B}},{\boldsymbol{P}},{\boldsymbol{f}},{\boldsymbol{l}},{\boldsymbol{x}},{\boldsymbol{m}},{\boldsymbol{q}}} \sum\limits_{i = 1}^U { - {\alpha _i} \times {f_i}} \\
	\rm{s.t.}\;\;&{f_i} \le {e^{{l_i}}},\forall i \in {{\cal U}}, \tag{\theequation a}\\
	{\rm{        }}&{l_i} \le  - \frac{1}{2}x_i^2,\forall i \in {{\cal U}}, \tag{\theequation b}\\
	{\rm{        }}&{x_i} \ge \frac{{{N_0}{B_i}{m_i}}}{{\delta {P_i}}},\forall i \in {{\cal U}}, \tag{\theequation c}\\
	{\rm{       }}&{m_i} \ge {2^{{q_i}}} - 1,\forall i \in {{\cal U}}, \tag{\theequation d}\\
	{\rm{       }}&{q_i} \ge \frac{{{d_0}\left( {1 - \sigma } \right)}}{{{B_i}{t_i}}},\forall i \in {{\cal U}}, \tag{\theequation e}\\
	{\rm{       }}&(\ref{Q1}a)-(\ref{Q1}d).\nonumber
\end{align}
However, constraints (\ref{Q7}a) and (\ref{Q7}c) are still non-convex. 

For constraint (\ref{Q7}a), we use the successive convex approximation (SCA) method to transform it into a convex constraint. Performing a first-order Taylor expansion of ${e^{{l_i}}}$ at ${e^{l_i^j}}$, then we have
\begin{eqnarray}\label{}
{f_i} \le {e^{l_i^j}} + \left( {{l_i} - l_i^j} \right){e^{l_i^j}}, 
\end{eqnarray}
while the superscript $j$ represents the value obtained after $j$-th iteration of the variable. 

For constraint (\ref{Q7}c), slack variable ${\boldsymbol{z}} = [{z_1},{z_2},...,{z_U}]$ is introduced, and have 
\begin{eqnarray}\label{z}
{z_i} \ge {B_i}{m_i}.
\end{eqnarray} 
Thus, constraint (\ref{Q7}c) can be transformed into 
\begin{eqnarray}\label{xp}
{x_i}{P_i} \ge \frac{{{N_0}{z_i}}}{{{\delta _i}}}.
\end{eqnarray}
(\ref{z}) can be rewritten as 
\begin{eqnarray}\label{}
{z_i} \ge {B_i}{m_i} = \frac{1}{4}\left( {{{\left( {{B_i} + {m_i}} \right)}^2} - {{\left( {{B_i} - {m_i}} \right)}^2}} \right)\end{eqnarray}
By performing a first-order Taylor expansion of ${\left( {{B_i} - {m_i}} \right)^2}$ at point $\left( {B_i^j,m_i^j} \right)$
and using SCA, we have 
\begin{eqnarray}\label{z_i1}
	\begin{aligned}
		{z_i} \ge \frac{1}{4}( {{\left( {{B_i} + {m_i}} \right)}^2} - 2\left( {{B_i} - {m_i}} \right)\left( {{B_i}^j - {m_i}^j} \right) \\ + {{\left( {{B_i}^j - {m_i}^j} \right)}^{\rm{2}}} ). 
	\end{aligned}
\end{eqnarray}
Similarly, (\ref{xp}) is equivalent to
\begin{eqnarray}\label{}
 {x_i}{P_i}{\rm{ = }}\frac{{\rm{1}}}{{\rm{4}}}\left( {{{\left( {{x_i} + {P_i}} \right)}^2} - {{\left( {{x_i} - {P_i}} \right)}^2}} \right) \ge \frac{{{N_0}{z_i}}}{{{\delta _i}}}. 
\end{eqnarray}
By performing a first-order Taylor expansion of ${\left( {{x_i} + {P_i}} \right)^2}$ and ${\left( {{x_i} - {P_i}} \right)^2}$ at point $\left( {{x_i}^j,{P_i}^j} \right)$ and using SCA, we can obtain 
\begin{eqnarray}\label{z_i2}
	\begin{aligned}
		\frac{{4{N_0}{z_i}}}{{{\delta _i}}} \le 2\left( {{x_i} + {P_i}} \right)*\left( {{x_i}^j + {P_i}^j} \right) - {\left( {{x_i}^j + {P_i}^j} \right)^2} \\- 2\left( {{x_i} - {P_i}} \right)*\left( {{x_i}^j - {P_i}^j} \right) + {\left( {{x_i}^j + {P_i}^j} \right)^2}.
	\end{aligned}
\end{eqnarray}
So far, all constraints are transformed into convex, and the optimization problem can be reformulated as
\begin{align}\label{Q8}
	&\mathop {\min }\limits_{{\boldsymbol{B}},{\boldsymbol{P}},{\boldsymbol{f}},{\boldsymbol{l}},{\boldsymbol{x}},{\boldsymbol{m}},{\boldsymbol{q}},{\boldsymbol{z}}} \sum\limits_{i = 1}^U { - {\alpha _i} \times {f_i}} \\
	\rm{s.t.}\;\;&{f_i} \le {e^{l_i^j}} + \left( {{l_i} - l_i^j} \right){e^{l_i^j}},\forall i \in {{\cal U}}, \tag{\theequation a}\\
	{\rm{        }}&{l_i} \le  - \frac{1}{2}x_i^2,\forall i \in {{\cal U}}, \tag{\theequation b}\\
	{\rm{       }}&{m_i} \ge {2^{{q_i}}} - 1,\forall i \in {{\cal U}}, \tag{\theequation c}\\
	{\rm{       }}&{q_i} \ge \frac{{{d_0}\left( {1 - \sigma } \right)}}{{{B_i}{t_i}}},\forall i \in {{\cal U}}, \tag{\theequation d}\\
	{\rm{       }}&(\ref{Q1}a)-(\ref{Q1}d), (\ref{z_i1}), (\ref{z_i2}).\nonumber
\end{align}

Problem (\ref{Q8}) is a convex optimization problem, and can be solved via the dual method\cite{convex}. Optimal results can be obtained by setting the initial value of $l_i^j$, $B_i^j$, $m_i^j$, $x_i^j$ and $P_i^j$, updating variables, and performing iterations until the problem converges, which is summarized in \textbf{Algorithm} \ref{SCA}.
\begin{algorithm}[t]
	\normalsize
	\caption{Resource Allocation with SCA.}
	\begin{algorithmic}[1]
		\STATE Initialize ${\boldsymbol{B}^{(0)}},{\boldsymbol{P}^{(0)}},{\boldsymbol{f}^{(0)}},{\boldsymbol{l}^{(0)}},{\boldsymbol{x}^{(0)}},{\boldsymbol{m}^{(0)}},{\boldsymbol{q}^{(0)}},{\boldsymbol{z}^{(0)}}$. Set iteration number $n=1$.
		\REPEAT
		\STATE Solve convex problem (\ref{Q8}).
		\STATE Denote the optimal solution of (\ref{Q8}) by (${\boldsymbol{B}^{(n)}},{\boldsymbol{P}^{(n)}},{\boldsymbol{f}^{(n)}},{\boldsymbol{l}^{(n)}},{\boldsymbol{x}^{(n)}},{\boldsymbol{m}^{(n)}},{\boldsymbol{q}^{(n)}},{\boldsymbol{z}^{(n)}}$).
		\STATE Set $n = n + 1$
		\UNTIL {the objective value (\ref{Q6}) converges.}
	\end{algorithmic}
	\label{SCA}
\end{algorithm}

Finally, we can iteratively solve (\ref{Q4}) and (\ref{Q8}) until a convergent solution is obtained. The overall CRRA algorithm is summarized in \textbf{Algorithm} \ref{summary}.
\begin{algorithm}[t]
	\normalsize
	\caption{CRRA Algorithm.}
	\begin{algorithmic}[1]
		\STATE Initialize semantic compression ratio $\boldsymbol o$, resource allocation $\boldsymbol B$ and $\boldsymbol P$.
		\REPEAT
		\STATE With fixed resource allocation $\boldsymbol B$ and $\boldsymbol P$, optimize semantic compression ratios $\boldsymbol o$ with the enumeration method.
		\STATE With fixed semantic compression ratios, obtain the optimal resource allocation $\boldsymbol B$ and $\boldsymbol P$ by solving (\ref{Q8}).
		\UNTIL {the objective value (\ref{Q3}) converges.}
	\end{algorithmic}
	\label{summary}
\end{algorithm}

\vspace{-0.1cm}
\section{CRRAUS Algorithm}
\label{sec:CRRAUS}
In this section, we consider the scenarios where users perform the tasks with different priorities (e.g., face and fire detection in the smart factory), and thus users have various service levels, and only part of them can be selected due to the wireless resource constraints. To deal with the problem (\ref{Q1}), the CRRAUS algorithm is proposed, which is able to adaptively adjust the user selection based on the wireless resources and service levels. Then, the convergence and complexity of CRRA and CRRAUS are analyzed. 

\vspace{-0.2cm}
\subsection{Algorithm Design}
It is hard to obtain the optimal solutions to the problem (\ref{Q1}) due to non-concave objective function and nonconvex constraints. To obtain a suboptimal solution to problem (\ref{Q1}), we propose a CRRAUS algorithm, in which problem (\ref{Q1}) is separated into three subproblems and solved iteratively. In particular, we first fix the resource allocation and user selection scheme to calculate the optimal compression ratio for each user. Second, we fix the resource allocation and compression ratios to solve the optimal user selection scheme. Finally, the problem of resource allocation is formulated and solved with the obtained compression ratios and user selection scheme.

Similar to ({\ref{Q4}}) and ({\ref{Q5}}), the first subproblem can be simplified into maximizing each user’s weighted success probability of tasks by optimizing the semantic compression ratio
\begin{align}\label{Q22}
	& \mathop {\max }\limits_{o_i} {{\lambda _i}\Phi _i} \\
	\rm{s.t.}\;\; & 0 \le {o_i} \le 1,\tag{\theequation a}
\end{align}
where ${\lambda _i}{\rm{ = }}{\beta _i}{w _i}$ is a constant only relevant to user $i$.
Problem (\ref{Q22}) can also be solved by the one-dimension enumeration method, and the solution process is omitted.

Given the resource allocation policy and compression ratios, the user selection subproblem can be simplified as
\begin{align}\label{Q23}
	& \mathop {\max }\limits_{\boldsymbol{\beta }} \sum\limits_{i = 1}^U {{\varsigma _i}{\beta _i}} \\
	\rm{s.t.}\;\; & (\ref{Q1}a), (\ref{Q1}c), (\ref{Q1}f), 
\end{align}
where ${\varsigma _i} = {{w _i}\Phi _i}$ can be can be regarded as a constant only related to user $i$. 
Problem (\ref{Q23}) is a 0-1 integer programming problem, which can be solved by branch and bound method\cite{BB}. The algorithm for user selection is summarized in \textbf{Algorithm} \ref{BB}.
\begin{algorithm}[htbp]
	\normalsize
	\caption{User selection with branch and bound method.}
	\begin{algorithmic}[1]
		\STATE Find the optimal solution to the linear programming problem with the 0-1 integer restrictions relaxed.
		\STATE At node 1, let the relaxed solution be the upper bound $Q^{U}$ and the rounded-down integer solution be the lower bound $Q^{L}$. Select the variable with the greatest fractional part for branching.
		\STATE Create two new nodes, one is for the $\beta_j = 0$ and the other is for the $\beta_j = 1$.
		\STATE Solve the relaxed linear programming problem with the new constraint added at each of these nodes, and obtain the relaxed solution $Q$ and $\boldsymbol{\beta}$.
		\STATE Let the upper bound $Q^{U} = Q$ at each node, and the existing maximum integer solution $Q^{L}$ is the lower bound.
		\IF {All elements in $\boldsymbol{\beta}$ are integers}
		\STATE The optimal integer solution $\boldsymbol{\beta^*} = \boldsymbol{\beta}$.
		\ELSE 
		\STATE Branch from the node with the greatest upper bound and return to step 3.
		\ENDIF
	\end{algorithmic}
	\label{BB}
\end{algorithm}

With the obtained compression ratios and the user selection scheme, the resource allocation subproblem can be reformulated as
\begin{align}\label{Q24}
 	& \mathop {\max }\limits_{{\boldsymbol{B}},{\boldsymbol{P}}} \sum\limits_{i = 1}^U {{\mu _i}g_i} \\
 	\rm{s.t.}\;\; & (\ref{Q1}a) - (\ref{Q1}d), 
 \end{align}
where ${\mu _i}={{\beta _i}{w _i}\eta ({o_i})}$ is a constant independent of the resource allocation scheme. Problem (\ref{Q24}) has the same form as the problem (\ref{Q6}), and both of them are non-convex and have linear constraints. Therefore, we can also use the SCA approach to transform (\ref{Q24}) into an approximated convex problem and solve it via the dual method. The detailed solution process is omitted here.

Finally, the three subproblems are iteratively solved until a convergent solution is obtained, and the iterative algorithm is summarized in \textbf{Algorithm} \ref{CRRAUS}.
\begin{algorithm}[htbp]
	\normalsize
	\caption{CRRAUS Algorithm.}
	\begin{algorithmic}[1]
		\STATE Initialize semantic compression ratio $\boldsymbol o$, user selection ${\boldsymbol{\beta }}$, resource allocation $\boldsymbol B$ and $\boldsymbol P$.
		\REPEAT
		\STATE With fixed user selection ${\boldsymbol{\beta }}$ and resource allocation $\boldsymbol B$ and $\boldsymbol P$, optimize semantic compression ratios $\boldsymbol o$ with the enumeration method.
		\STATE With fixed semantic compression ratio $\boldsymbol o$ and resource allocation $\boldsymbol B$ and $\boldsymbol P$, optimize user selection ${\boldsymbol{\beta }}$ with branch and bound method.
		\STATE With fixed semantic compression ratios and user selection ${\boldsymbol{\beta }}$, obtain the optimal resource allocation $\boldsymbol B$ and $\boldsymbol P$ with SCA approach.
		\UNTIL {the objective value (\ref{Q1}) converges.}
	\end{algorithmic}
	\label{CRRAUS}
\end{algorithm}

\vspace{-0.2cm}
\subsection{Convergence and Complexity Analysis}
This subsection analyzes the convergence and computational complexities of CRRA and CRRAUS. 

The convergence of CRRA mainly depends on the resource allocation subproblem, while the first subproblem is solved by the one-dimension enumeration method. Thus, we focus on analyzing the convergence of Algorithm \ref{SCA}, which is illustrated by the following lemma.

\noindent\textbf{$Lemma\ 2.$} The total success probability of tasks obtained in Algorithm \ref{SCA} is monotonically non-decreasing, and the sequence (${\boldsymbol{B}^{(n)}},{\boldsymbol{P}^{(n)}}$) converges to a point fulfilling the KKT optimal conditions of the original non-convex problem (\ref{Q6}).
\begin{proof}
	Since Lemma 2 directly follows from Proposition 3 in \cite{Zappone}, the proof of Lemma 2 is omitted.
\end{proof}
The major complexity in each iteration lies in solving the semantic compression ratios subproblem and the resource allocation subproblem. With fixed resource allocation, the complexity of using the enumeration method is ${\cal O}({K^U})$ for solving (\ref{Q4}), where $K$ is the number of enumerations. With fixed compression ratios, the complexity of solving (\ref{Q8}) is ${\cal O}({U^{3.5}})$. As a result, the total complexity of CRRA is given by ${\cal O}({T_0}{K^U} + {T_0}{U^{3.5}})$, where $T_0$ is the number of iterations in CRRA.

Since the convergence analysis of CRRAUS is similar to that of CRRA, the detailed analysis is omitted. The computational complexity of CRRAUS consists of three parts. The first part is for solving the problem (\ref{Q22}) by one-dimension enumeration method, the second part is for solving 0-1 integer programming problem (\ref{Q23}) via branch and bound method, and the third part is for solving the problem (\ref{Q24}) by SCA. In step 3) of Algorithm \ref{CRRAUS}, the complexity of solving problem (\ref{Q22}) is ${\cal O}(K^U)$. The complexities of steps 4) and 5) are ${\cal O}(U^3)$ and ${\cal O}(U^{3.5})$, respectively. Therefore, the total complexity of CRRAUS is ${\cal O}((U^3+U^{3.5}+K^U)T_1)$, where $T_1$ is the number of iterations in Algorithm \ref{CRRAUS}. CRRAUS completes user selection at the expense of higher algorithm complexity than CRRA. Besides, it can be observed that the complexity of the two algorithms increases sharply with the increase 
in the number of users, which can be further optimized in future work.

\vspace{-0.1cm}
\section{Simulation Results and Analysis}
\label{sec:simulation}

\begin{table}[t]
	\normalsize
	\vspace{-0.1cm}
	\centering
	\caption{Simulation and Hyper Parameters}
	\setlength{\abovecaptionskip}{-0.5cm}
	\begin{tabular}{cc}
		\toprule
		\textbf{Simulation Parameter} & \textbf{Value} \\
		\midrule
		Initial data size, ${{d_0}}$ & 24.5 MB \\
		Delay constraint of users, $t_0$ & 1-10 ms\\
		Noise power spectral density, $N_0$ & -174 dBm/Hz\\
		Minimum bandwidth, ${{B_{\min}}}$ & 0.01 MHz\\
		Minimum transmit power, ${P_{\min}}$ & -20dBm\\
		The number of users, $U$ & 10\\
		Compression ratio, $o$ & 0-1\\
		Maximum bandwidth, $B_{\max}$ & 1 MHz-30 MHz\\
		Maximum transmit power, $P_{\max}$ & 1 mW-1 W\\
		Weights of service levels,  ${{\varepsilon}}$ & [0.2,0.4,0.6,0.8]\\
		\midrule
		\textbf{Hyper Parameter} & \textbf{Value} \\
		\midrule
		Epoch & 50 \\
		Batchsize & 32\\
		Optimizer & Adam\\
		Learning rate & 0.01\\
		Momentum & 0.9\\
		$\kappa$ & $10^{-3}$\\
		\toprule
	\end{tabular}
	\vspace{-0.cm}
	\label{tab:simulation}
\end{table}

\begin{table}[t]
\normalsize
\centering
\caption{\centering The DNN Structure for Classification}
\begin{tabular}{|c|c|c|c|}
  \hline 
  \cline{2-4}  &Layer & Output Size & Activation\\
  \hline
  \multirow{4}{*}{Transmitter} & Conv Layer & 64$\times$112$\times$112 & Relu\\ 
  \cline{2-4}  & ResNet Block & 128$\times$28$\times$28 & Relu\\
  \cline{2-4}  & ResNet Block & 512$\times$7$\times$7 & Relu\\ 
  \cline{2-4}  & Pooling Layer & 512 & None\\
  \hline
  Channel & Dense Layer & None & None\\ 
  \hline
  \multirow{3}{*}{Receiver} & Dense Layer & 256 & Relu\\ 
  \cline{2-4}  & Dense Layer & 128 & Relu\\ 
  \cline{2-4}  & Output Layer & 10 & Softmax\\
  \hline
\end{tabular}
\label{tab:network}
\end{table}

In our simulation, a circular network is considered with one edge server and $U = 10$ users. Unless specifically stated, the simulation parameters are listed in Table \ref{tab:simulation}. In the experiments, we take the image classification task as an example to illustrate. STL-10 dataset \cite{stl-Coates} is used as training and testing data, which contains images of 10 categories of objects, corresponding to 10 semantic concepts. To verify the applicability of the proposed semantic communication system to different neural networks, experiments are conducted based on the backbone of VGG\cite{VGG16} and Resnet\cite{Resnet18} networks. The network structure based on Resnet is shown in Table \ref{tab:network}. We deploy convolutional layers at the transmitter to extract a compact representation. Besides, we add a pooling layer at the end of the transmitter to reduce the communication overhead. Correspondingly, several dense layers are adopted at the receiver for further processing and outputting the task results. The hyperparameters during training are listed in Table \ref{tab:simulation}. Performing network inference under different channel signal-to-noise ratios (SNR) after the network is trained, we can obtain the parameters of the intelligent task performance model proposed in \ref{subsec:intelligent}, which are listed in Table \ref{tab:fitting}. From Table \ref{tab:fitting}, we find that the points of ${{\cal D}}$ are well approximated by the exponential function with extremely small reconstruction errors, which is quantified by root-mean-square error (RMSE).

In the following, the proposed semantic communication system with ASC (labeled as ”ASC”) is first compared with the traditional communication method (labeled as "TCM”). In the traditional communication method, the image is encoded by JPEG and transmitted to complete the task. Then, we compare the proposed CRRA and CRRAUS algorithms with three baselines: resource allocation scheme with fixed compression ratios (labeled as ”FCR”), compression ratios optimization scheme with fixed resource allocation (labeled as ”FRA”), and conventional resource allocation scheme to maximize the system sum-rate (labeled as ”MSR”).
\begin{table}[t]
	\normalsize
	\centering
	\caption{\centering Parameters of task performance model}
	\subtable[Based on the backbone of VGG]{
		\begin{tabular}{|c|c|c|c|}
			\hline
			\diagbox{${\boldsymbol{\zeta }}$}{SNR}&$-$5dB&0dB&5dB\\ 
			\hline
			$\zeta_1$&$-$9.503e$-$17&$-$2.202e$-$16&$-$2.76e$-$18\\
			\hline
			$\zeta_2$&36.77&35.94&40.33\\
			\hline
			$\zeta_3$&0.9044&0.9137&0.9205\\
			\hline
			$\zeta_4$&$-$0.01869&$-$0.02349&$-$0.02257\\
			\hline
			RMSE&0.0449&0.0488&0.0510\\
			\hline
		\end{tabular}
		\label{tab:firsttable}
	}
	\\[12pt]
	\subtable[Based on the backbone of Resnet]{        
		\begin{tabular}{|c|c|c|m{1.7cm}<{\centering}|}
			\hline
			\diagbox{${\boldsymbol{\zeta }}$}{SNR}&$-$5dB&0dB&5dB\\ 
			\hline
			$\zeta_1$&$-$6.205e$-$08&$-$2.893e$-$16&$-$8.875e$-$16\\
			\hline
			$\zeta_2$&16.45&35.68&34.54\\
			\hline
			$\zeta_3$&0.9228&0.9482&0.9458\\
			\hline
			$\zeta_4$&$-$0.06917&$-$0.04151&0.007934\\
			\hline
			RMSE&0.0272&0.0282&0.0491\\
			\hline
		\end{tabular}
		\label{tab:secondtable}
	}
	\label{tab:fitting}
\end{table}

\begin{figure}[htbp]
	\centering
	\subfigure[Based on the backbone of VGG]{
		\includegraphics[width=1\linewidth]{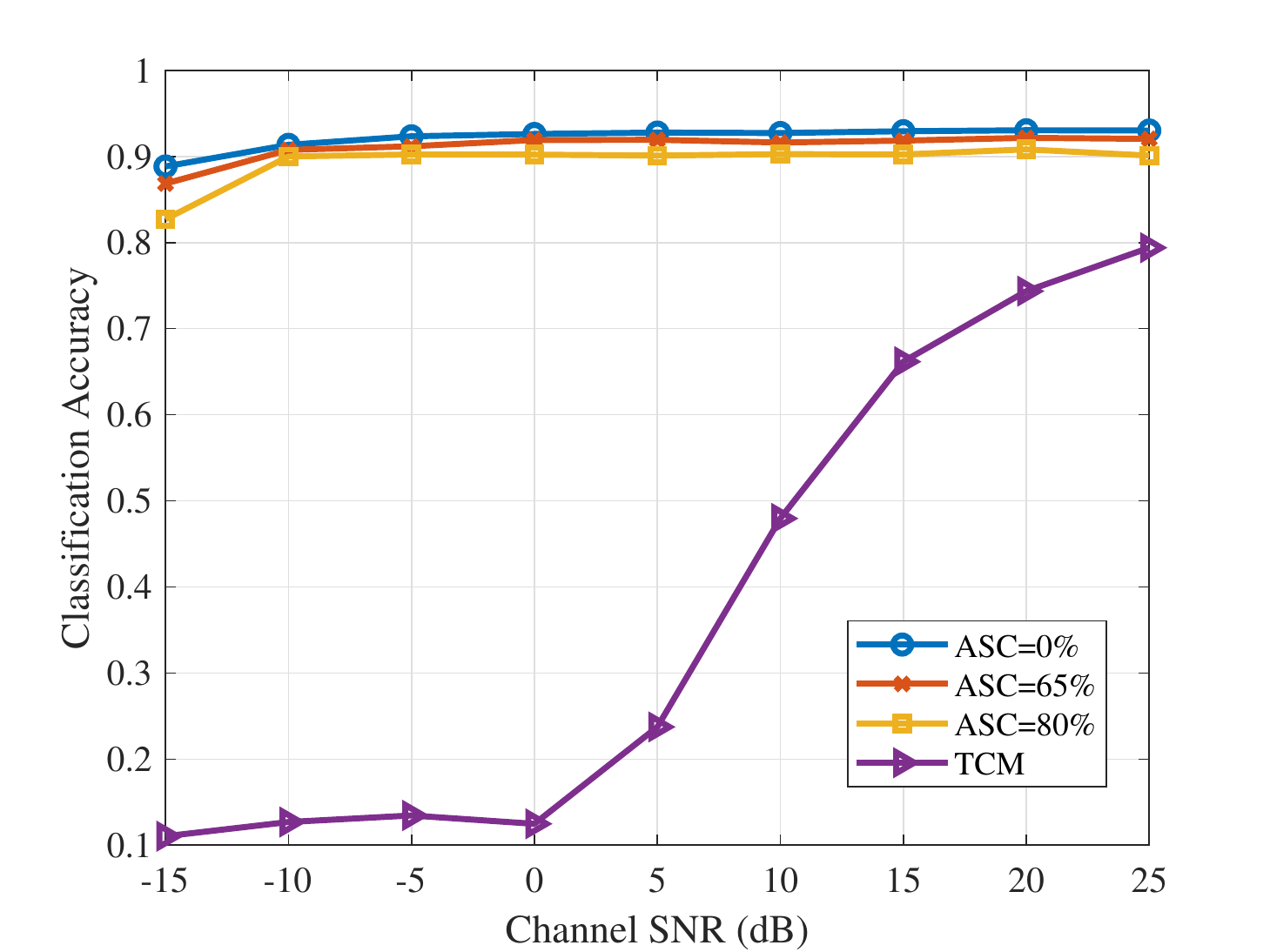}
	}
	\subfigure[Based on the backbone of Resnet]{
		\includegraphics[width=1\linewidth]{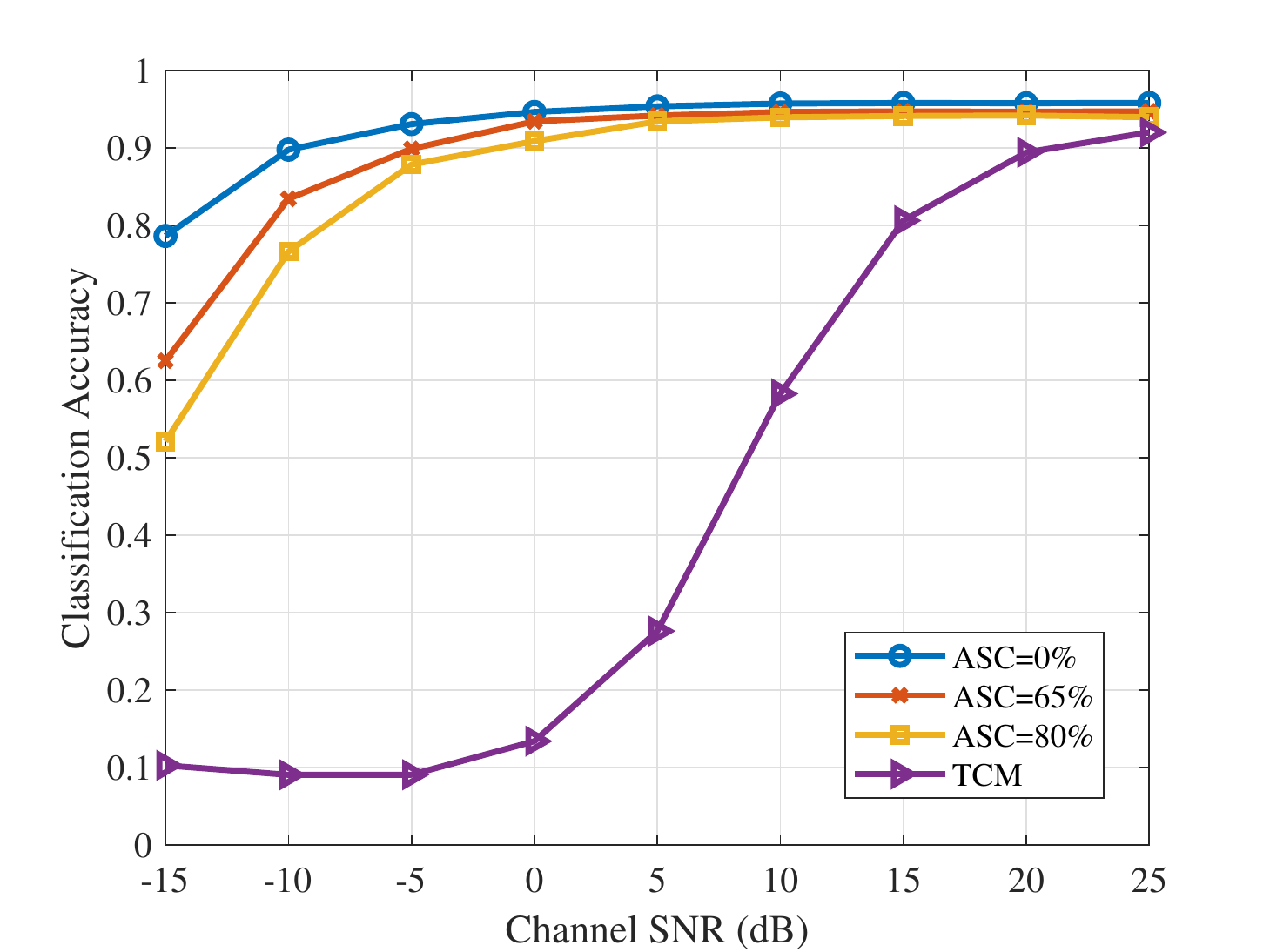}
	}
	\caption{Classification accuracy versus channel SNR under different communication systems.}
	\label{fig:system_compare}
	\vspace{-0.3cm}
\end{figure}

Fig. \ref{fig:system_compare} shows the classification accuracy versus channel SNR under different communication systems, where Fig. \ref{fig:system_compare}(a) is conducted based on the backbone of VGG and Fig. \ref{fig:system_compare}(b) is conducted based on the backbone of Resnet. "ASC=0\%", "ASC=65\%", and "ASC=80\%" refer to the user employing the proposed ASC with the semantic compression ratio equaling to 0\%, 65\%, and 80\%, respectively, where "ASC=0\%" is equivalent to the existing semantic communication system. As shown in Fig. \ref{fig:system_compare}, the performance of DL-based semantic communications is much better than that of TCM, especially in low SNR regimes. In addition, compared with "ASC=0\%", "ASC=65\%" and "ASC=80\%" will suffer a loss in classification performance. However, it can be seen from the two experimental results that when the compression ratio reaches 80\%, the loss of classification accuracy is tiny when the SNR is greater than 0 dB. This proves that the proposed ASC approach can greatly reduce the amount of transmitted data, and thus reduce the transmission delay without affecting the task performance, which is more suitable for resource-limited scenarios.

\begin{figure}[t]
	\begin{center}
		\includegraphics[width=1\linewidth]{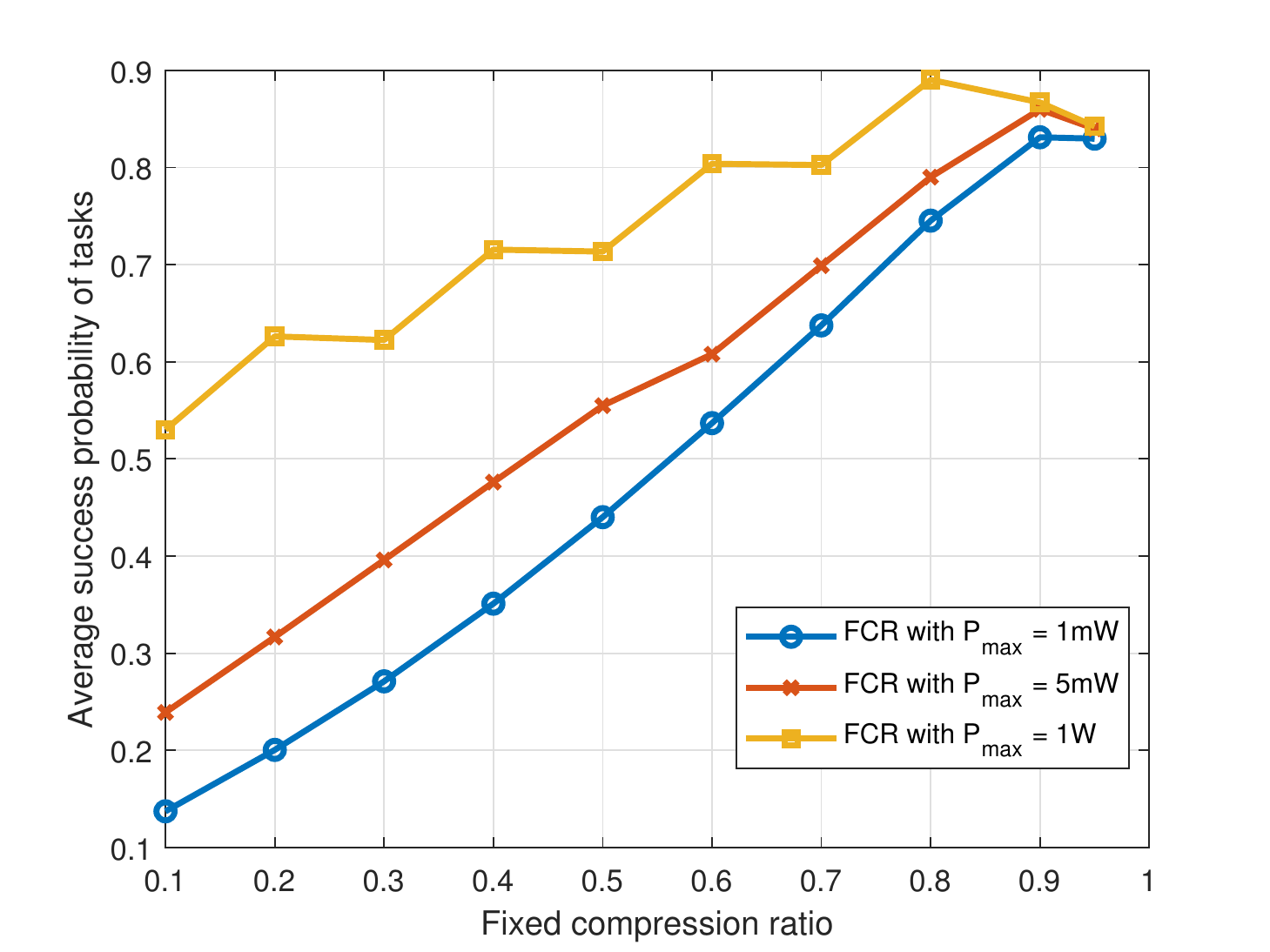}
	\end{center}
	\caption{Average success probability of tasks versus the fixed compression ratio under different maximum transmit power.}
	\label{fig:CR_power}
\end{figure}

\begin{figure}[t]
	\begin{center}
		\includegraphics[width=1\linewidth]{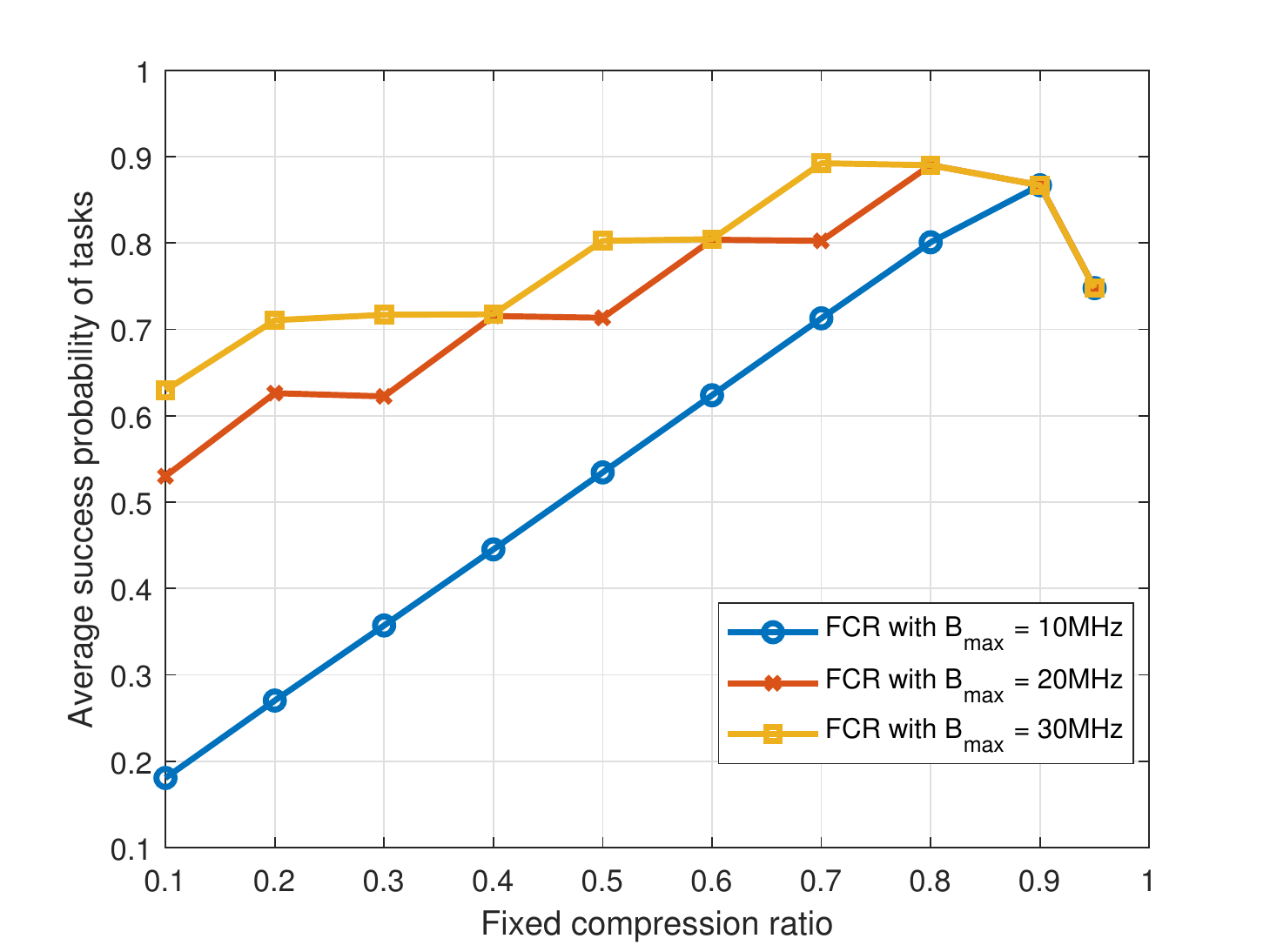}
	\end{center}
	\caption{Average success probability of tasks versus the fixed compression ratio under different maximum bandwidth.}
	\label{fig:CR_bandwidth}
	\vspace{-0.3cm}
\end{figure}
Figs. \ref{fig:CR_power} and \ref{fig:CR_bandwidth} illustrate the average success probability of tasks versus different compression ratios under different maximum transmit power and different maximum bandwidth, respectively. When the maximum transmit power and maximum bandwidth change, it can be observed that the optimal compression ratios are variable (for example, when the maximum bandwidth is 10MHz, the optimal compression ratio is 0.7, and when the maximum bandwidth is 20MHz, the optimal compression ratio is 0.8), which verifies that resources will affect the optimal compression ratios and the necessity of optimizing the compression ratios. We can also observe that the average success probability of tasks increases first and then decreases as the compression ratio increasing. This is because the choice of compression ratios is a trade-off between communication transmission and task performance, which reflects that choosing the optimal compression ratios is of great significance for semantic communications. The maximum bandwidth and the maximum transmit power of subsequent simulations are set to 20MHz and 1W, respectively, and thus the fixed compression ratio of FCR in subsequent simulations is set to 0.8 for a fair comparison.

\begin{figure}[t]
	\begin{center}
		\includegraphics[width=1\linewidth]{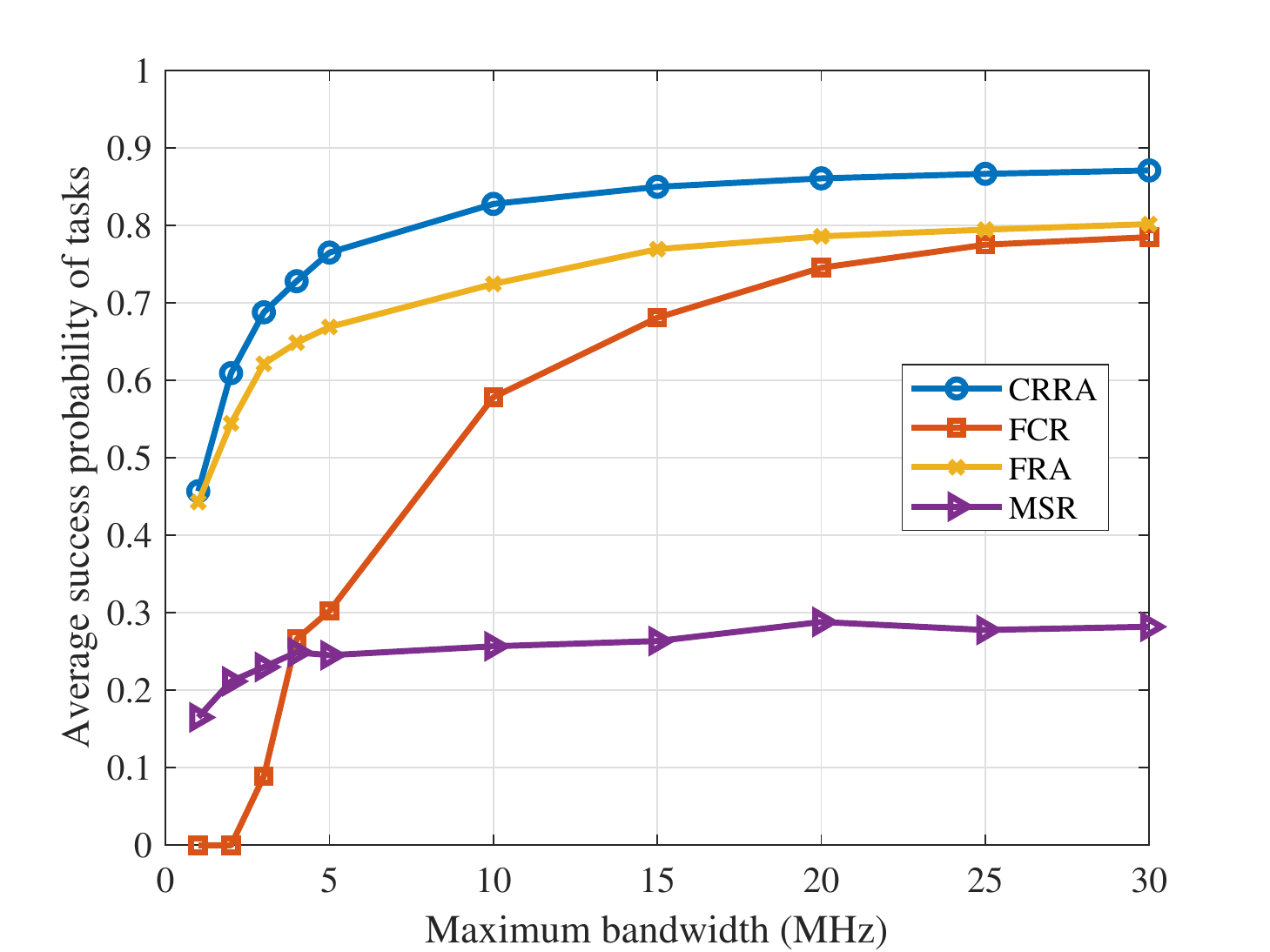}
	\end{center}
	\caption{Average success probability of tasks versus the maximum bandwidth with $P_{\rm{max}} = 1$mW.}
	\label{fig:bandwidth1}
\end{figure}

\begin{figure}[t]
	\begin{center}
		\includegraphics[width=1\linewidth]{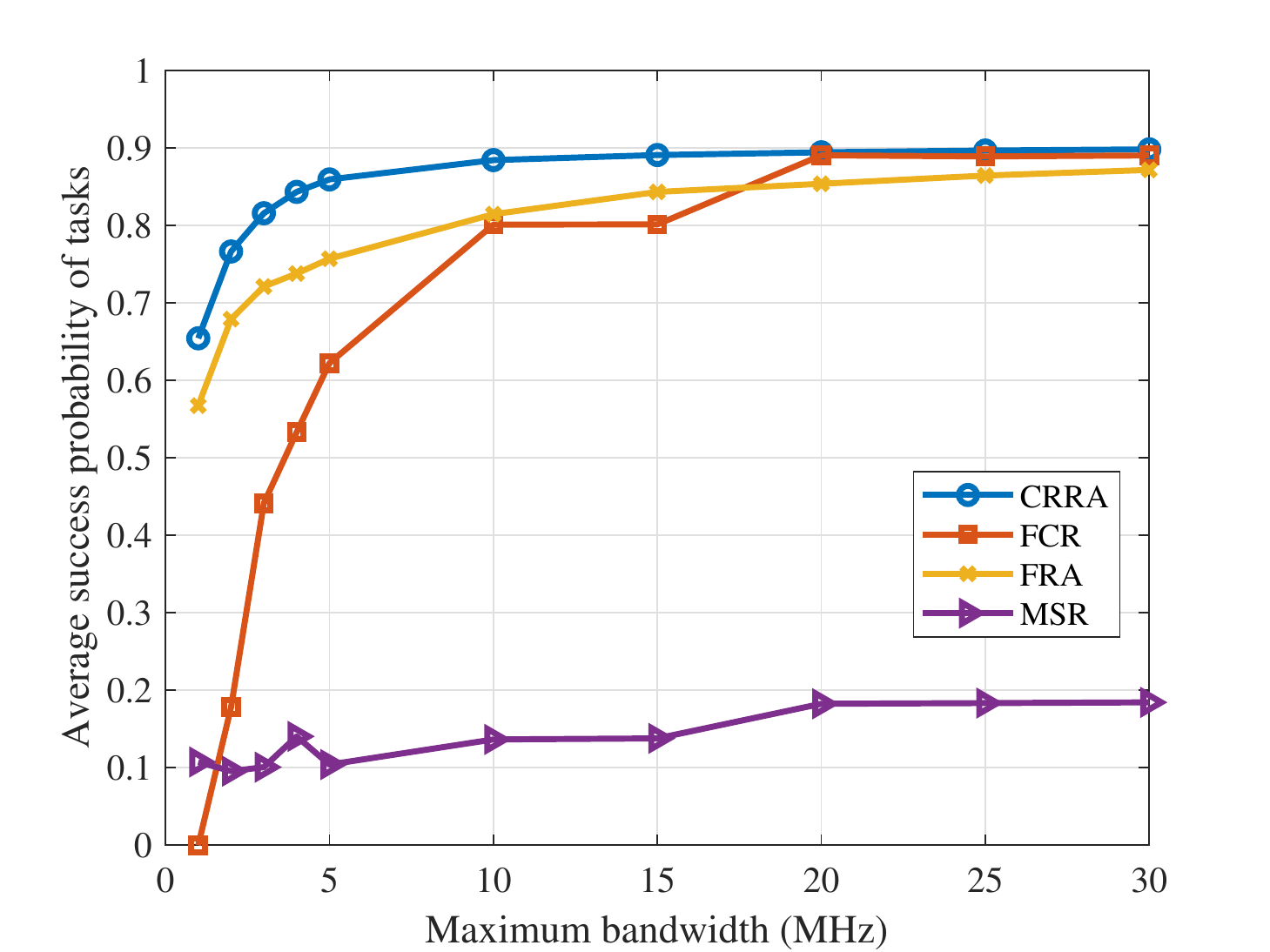}
	\end{center}
	\caption{Average success probability of tasks versus the maximum bandwidth with $P_{\rm{max}} = 1$W.}
	\label{fig:bandwidth2}
	\vspace{-0.2cm}
\end{figure}
The average success probability of tasks versus the maximum bandwidth under different maximum transmit power are shown in Figs. \ref{fig:bandwidth1} and \ref{fig:bandwidth2}. As shown in these figures, the average success probability of tasks increases with the maximum bandwidth and gradually converges to a certain threshold. This is because large bandwidth can decrease the transmission delay and tolerate a small semantic compression ratio, which consequently increases the probability of successful transmission and the average success probability of tasks. It can be observed that the average success probability of tasks of the proposed algorithm is always higher than that of others, especially in low bandwidth regions. It can be found that the conventional resource allocation scheme MSR is no longer suitable for semantic communication scenarios. This is because the conventional resource allocation scheme only optimizes the transmission rate and lacks the consideration of semantics and subsequent intelligent tasks.

\begin{figure}[t]
	\begin{center}
		\includegraphics[width=1\linewidth]{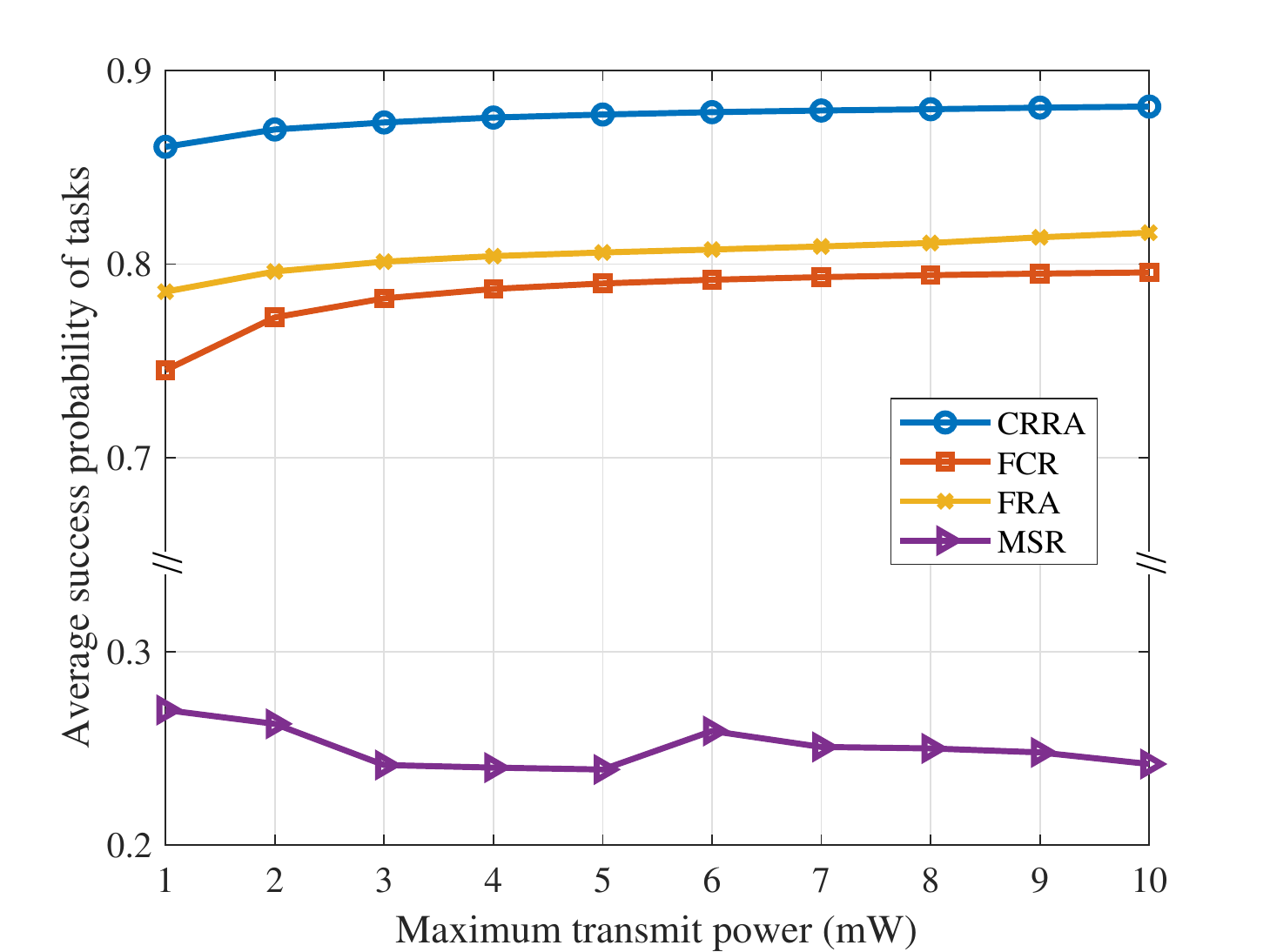}
	\end{center}
	\caption{Average success probability of tasks versus the maximum sum transmit power.}
	\label{fig:power}
\end{figure}
The average success probability of tasks versus the maximum transmit power is depicted in Fig. \ref{fig:power}. From this figure, we can observe that the proposed CRRA achieves better performance than FCR, FRA, and MSR. Fig. \ref{fig:power} demonstrates that the average success probability of tasks increases as the maximum transmit power. This is because large transmit power can increase the transmission rate, consequently increasing the amount of transmitted semantics. It can also be observed that the proposed algorithm harvests significant performance gains compared with the baselines. From Fig. \ref{fig:power}, we can further find that the MSR has little improvement in semantic performance. This is because the MSR method only focuses on technical performance, which may not transmit the semantic information required for intelligent tasks well. Besides, the proposed algorithm can perform well even in very low transmit power regions, which shows that our algorithm is very suitable for low-power scenarios.

\begin{figure}[t]
	\begin{center}
		\includegraphics[width=1\linewidth]{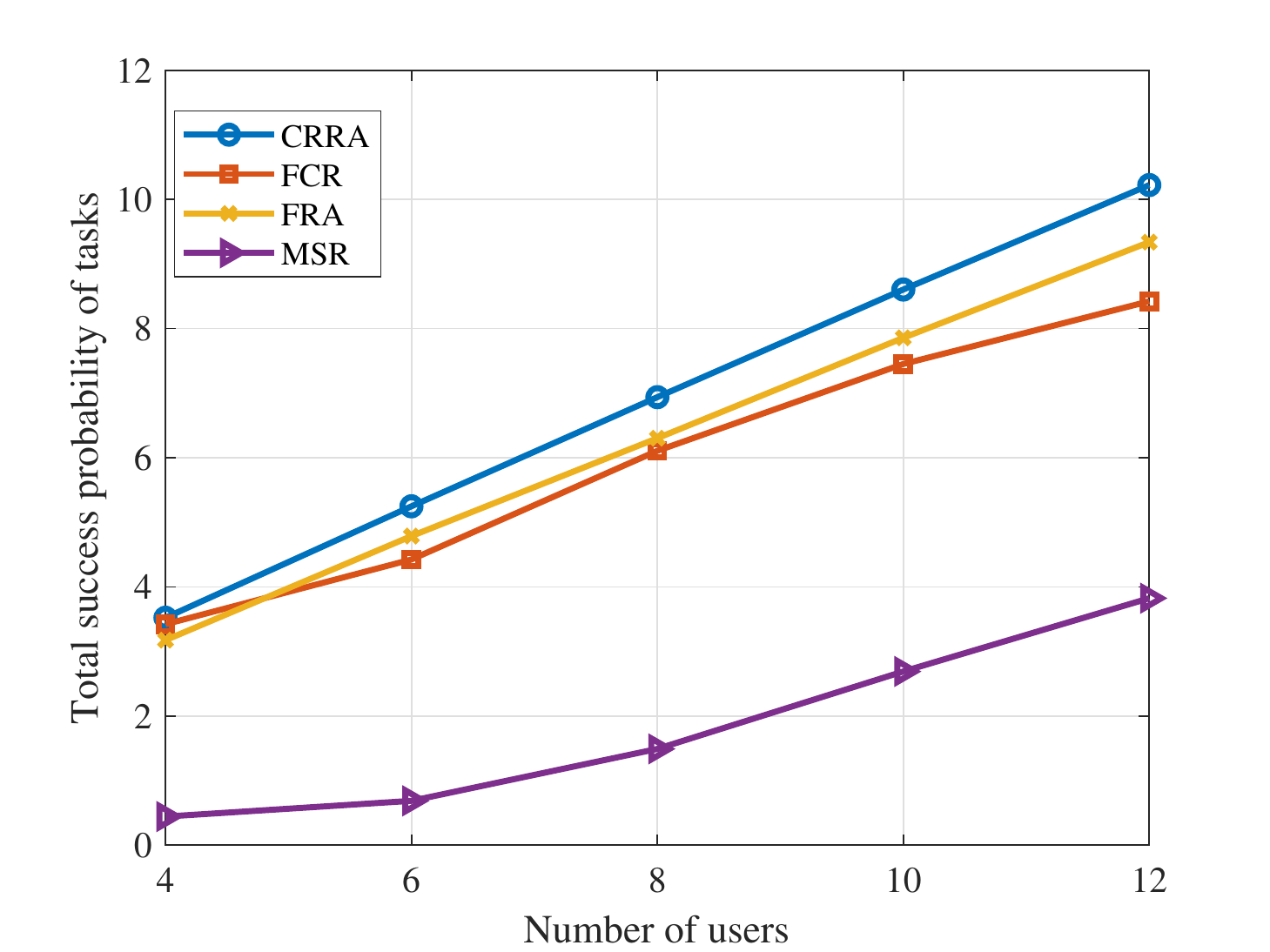}
	\end{center}
	\caption{Total success probability of tasks versus number of users.}
	\label{fig:user}
	\vspace{-0.3cm}
\end{figure}
The total success probability of tasks versus the number of users is given in Fig. \ref{fig:user}. Clearly, the proposed CRRA is always better than FCR, FRA, and MSR, especially when the number of users is large. This is because CRRA can effectively determine the compression ratios and the resource allocation scheme to meet the delay constraint, while FCR and FRA only take one of them into consideration. MSR has the worst performance because only maximizing the sum rate cannot guarantee an accurate understanding of semantic information. When the number of users is large, the multi-user gain is more apparent by the proposed CRRA compared to conventional FCR and FRA. This is because the resources are relatively tight when there are a large number of users, and CRRA can make full use of resources and find the optimal trade-off between compression and transmission. CRRA achieves better performance than FCR and FRA at the cost of additional computational complexity.

\begin{figure}[t]
	\begin{center}
		\includegraphics[width=1\linewidth]{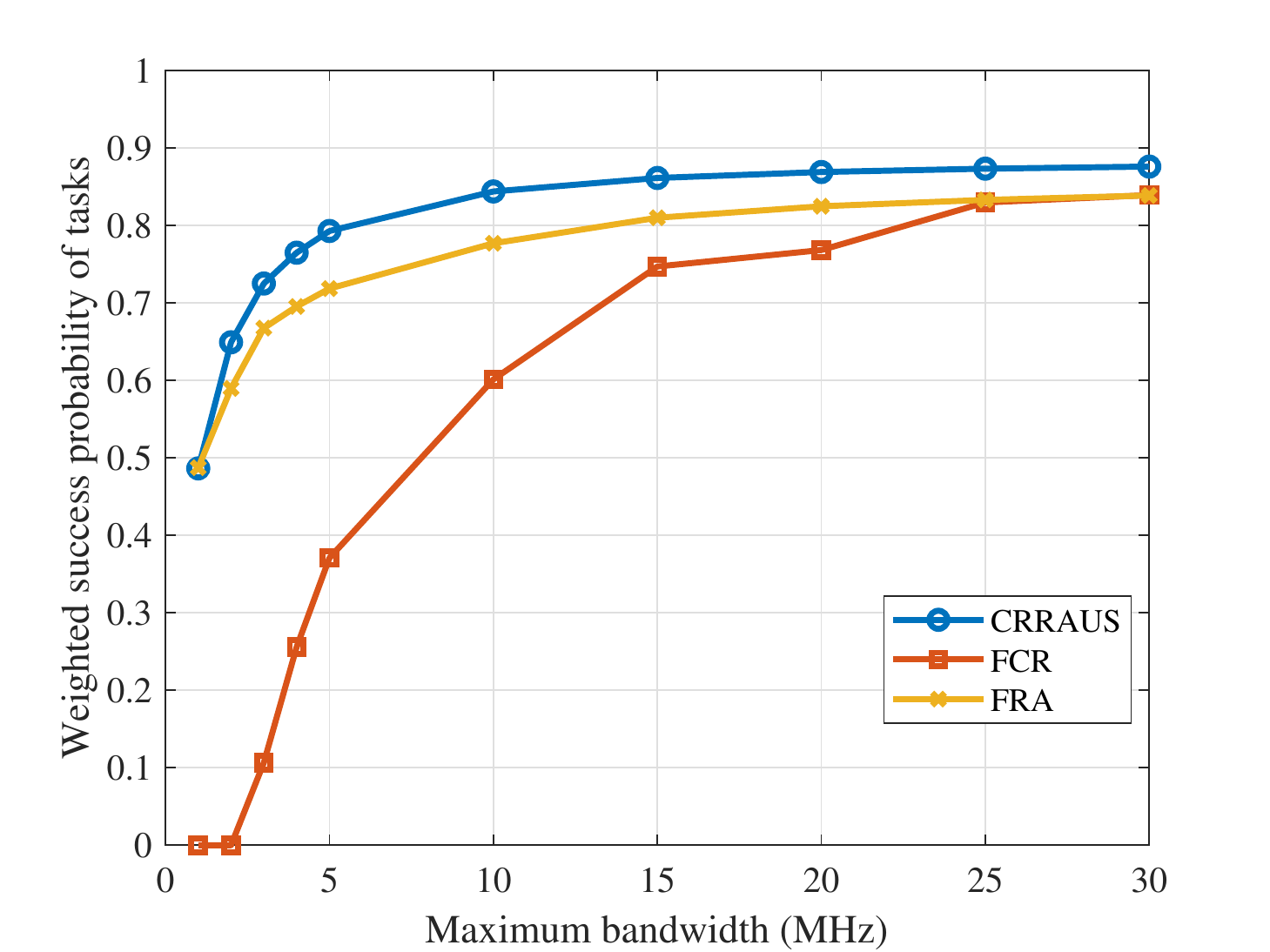}
	\end{center}
	\caption{Weighted success probability of tasks versus the maximum bandwidth.}
	\label{fig:wCRRA_bandwidth}
\end{figure}
Fig. \ref{fig:wCRRA_bandwidth} shows how the weighted success probability of tasks changes as the maximum bandwidth. From fig. \ref{fig:wCRRA_bandwidth}, we can see that the weighted success probability of tasks rises as maximum bandwidth increases. This is because the larger the bandwidth, the more resources users are allocated, and the better the performance of semantic communications will be. The proposed CRRAUS outperforms FRA and FCR in terms of the weighted success probability of tasks, particularly for cases with a small bandwidth. This is because CRRAUS can simultaneously optimize the resource allocation, compression ratios, and user selection, while comparison schemes can only optimize one of them separately.

\begin{figure}[t]
	\begin{center}
		\includegraphics[width=1\linewidth]{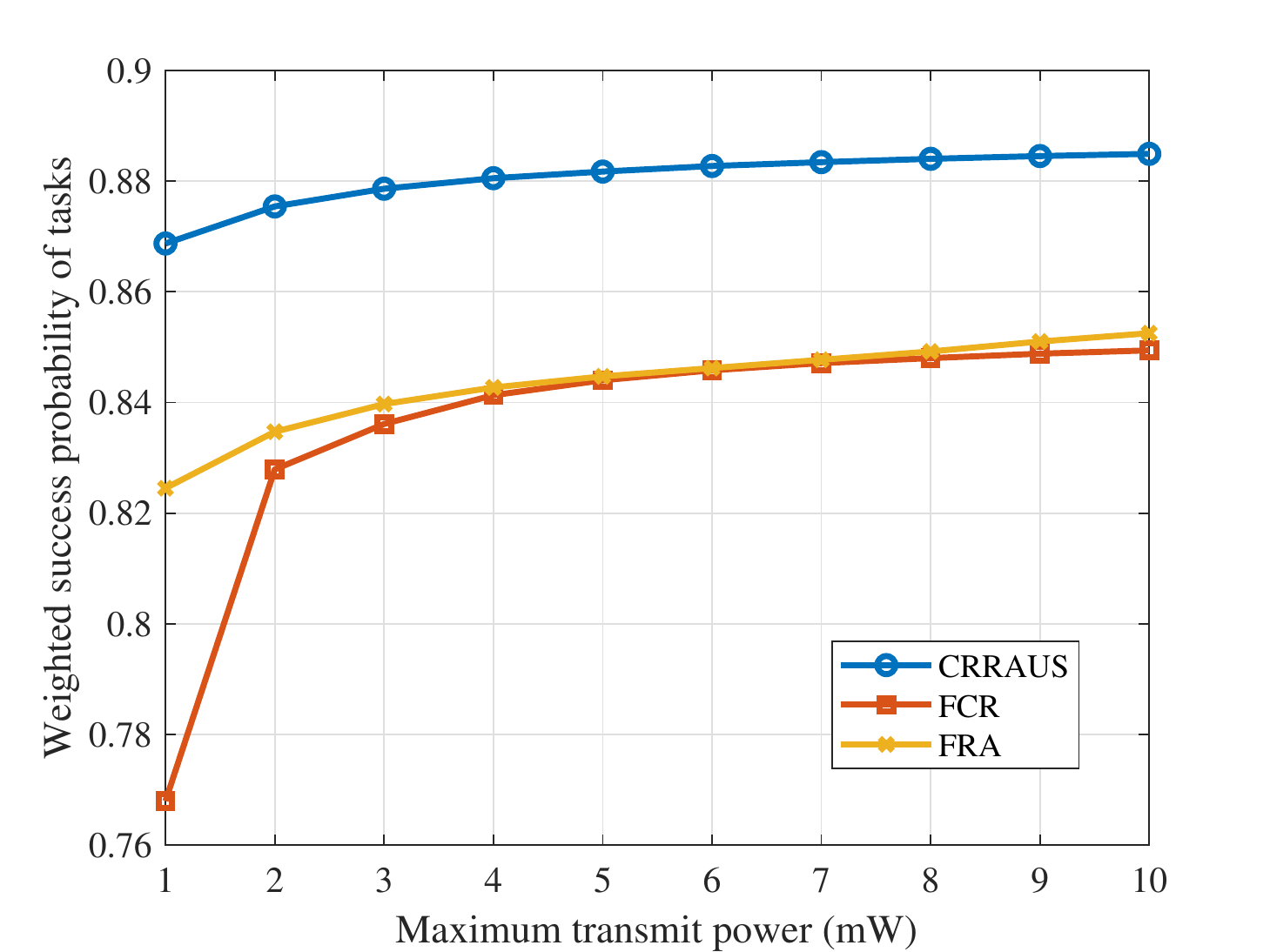}
	\end{center}
	\caption{Weighted success probability of tasks versus maximum sum transmit power.}
	\label{fig:wCRRA_power}
\end{figure}

The weighted success probability of tasks versus the maximum transmit power is given in Fig. \ref{fig:wCRRA_power}. From this figure, the weighted success probability of tasks increases for all schemes as the maximum transmit power varies. This is because high transmit power can increase the amount of transmitted semantics, which improves the performance of semantic communication. It is observed that the proposed CRRAUS achieves the best performance under different maximum transmit power. This is because users are adaptively selected based on resources and service levels in CRRAUS algorithm, while the comparison schemes select all users regardless of service level and without joint optimization, which verifies the superiority of joint optimization of compression ratios, resource allocation, and user selection.

\vspace{-0.cm}
\section{Conclusion}
\label{sec:conclusion}
In this paper, we have investigated performance optimization for task-oriented multi-user semantic communications. Specifically, we have first developed a task-oriented multi-user semantic communication system, in which an ASC approach is proposed to compress semantics to reduce the communication burden adaptively. Then, we have formulated a resource allocation and compression ratios optimization problem under bandwidth and power constraints to maximize the success probability of tasks, which is defined to measure the performance of semantic communications. For scenarios where users have the same service levels, we have proposed a CRRA algorithm to optimize resource allocation and compression ratios, where the nonconvex problem is decomposed into two subproblems and solved iteratively. Furthermore, considering that users have various service levels, a CRRAUS algorithm has been proposed, in which users are adaptively selected based on the branch and bound method. Simulation results have shown that the proposed ASC approach can significantly reduce the size of transmitted data, and both CRRA and CRRAUS algorithms achieve higher success probability of tasks than the benchmarks, especially when communication resources are tight. Compared with CRRA, the CRRAUS is more suitable for scenarios with significant differences in service levels at the expense of higher complexity. Future extensions of this work will further explore a unified semantic importance measurement method and reduce the computational complexity of the algorithms.

\begin{appendices}
\section{$Proof\ of\ Lemma\ 1$}
\label{proof_lemma1}
\begin{proof}
	Based on (\ref{R}) and (\ref{t}), we have
	\begin{align}\label{Q2}
		P({t_i} \le {t_0}) = &{\rm{P}}\left( {\frac{{\left( {1 - {o_i}} \right){d_0}}}{{{B_i}{{\log }_2}\left( {1 + \frac{{{h_i}{P_i}}}{{{N_0}{B_i}}}} \right)}} \le {t_0}} \right)\nonumber\\
		= &{\rm{P}}\left( {\frac{{{2^{{a_i}(1 - {o_i})}} - 1}}{{{b_i}}} \le {h_i}} \right)\nonumber\\
		= &2Q\left( {\frac{{{2^{{a_i}(1 - {o_i})}} - 1}}{{{b_i}\delta }}} \right),
	\end{align}
	where the last equality follows from ${h_i} \sim N(0,{\delta ^2})$.	

	This completes the proof of Lemma 1.
\end{proof}

\vspace{-0.2cm}
\end{appendices}
\bibliographystyle{IEEEbib}
\nocite{*}\bibliography{stimreference}

\end{document}